%
%
%
%
%
%
\RequirePackage{fix-cm}
\RequirePackage{amsmath}
\documentclass{svjour3}                     
\smartqed  
\usepackage{subcaption}
\usepackage{hyperref}
\usepackage{graphicx}
\usepackage{amsfonts, amssymb}
\usepackage{enumerate}
\usepackage{natbib}
\usepackage{xcolor}
\newtheorem{observation}{Observation}
\newtheorem{prop}{Proposition}

\newcommand{\V}{{\cal V}}
\newcommand{\N}{{N}}
\newcommand{\I}{{\cal I}}
\newcommand{\leafSet}{{\cal X}}

\newcommand{\rev}[1]{\textcolor{black}{#1}}
%
%
%
%
\journalname{Journal of Mathematical Biology}
\begin{document}

\title{Distinguishing level-1 phylogenetic networks on the basis of data generated by Markov processes \thanks{Leo van Iersel, Remie Janssen, Mark Jones and Yukihiro Murakami were partly supported by the Netherlands Organization for Scientific Research (NWO), Vidi grant 639.072.602 and Mark Jones also by the gravitation grant NETWORKS.  Elizabeth Gross was supported by the National Science Foundation (NSF), DMS-1945584}
}


\author{Elizabeth Gross \and
        Leo van Iersel  \and
        Remie Janssen   \and
        Mark Jones      \and
        Colby Long      \and
        Yukihiro Murakami
}


\institute{ E. Gross \at
              Department of Mathematics, University of Hawai`i at M\={a}noa,
              2565 McCarthy Mall, Honolulu, Hawai`i 96822 \\
              \email{egross@hawaii.edu}
            \and
            L. van Iersel, R. Janssen, M. Jones, Y. Murakami \at
              Delft Institute of Applied Mathematics, Delft University of Technology, Van Mourik Broekmanweg 6, 2628 XE, Delft, The Netherlands \\
              \email{\{L.J.J.vanIersel, R.Janssen-2, M.E.L.Jones, Y.Murakami\}@tudelft.nl}
            \and
            C. Long \at
              The College of Wooster, 1189 Beall Avenue Wooster, Ohio 44691 \\
              \email{clong@wooster.edu}
}

\date{Received: date / Accepted: date}

\maketitle

\begin{abstract}
Phylogenetic networks can represent evolutionary events that cannot be described by phylogenetic trees. These networks are able to incorporate reticulate evolutionary events such as hybridization, introgression, and lateral gene transfer. Recently, network-based Markov models of DNA sequence evolution have been introduced along with model-based methods for reconstructing phylogenetic networks. For these methods to be consistent,
the network parameter needs to be identifiable from data generated under the model. 
Here, we show that the semi-directed network parameter of a triangle-free, level-1 network model with any fixed number of reticulation vertices is generically identifiable under the Jukes-Cantor, Kimura 2-parameter, or Kimura 3-parameter constraints.
\keywords{phylogenetic networks \and identifiability \and reticulation \and Markov processes}
\end{abstract}

\section{Introduction}

Typically, the goal of a phylogenetic analysis is to find a tree that describes the evolutionary relationships among a set of taxa. However, because trees, as directed graphs, have acyclic skeletons, they cannot represent reticulate
evolutionary events, such as hybridization, introgression, and lateral gene transfer. Recognizing this limitation, 
it has become increasingly common to use phylogenetic networks in order to more accurately describe
the history of some sets of taxa~\citep{bapteste}. 
This increasing attention to phylogenetic networks has led to many
new results about the combinatorial properties of phylogenetic networks \citep{huson2010phylogenetic,gusfield2014recombinatorics},~\citep[Chapter~10]{steel2016phylogeny},
as well as to new methods for inferring phylogenetic networks from biological data. 

Many of these new methods for inferring phylogenetic networks are based on constructing networks from small sets of inferred trees \citep{Baroni2005,Huber2011,RIATA2005,QuartetNet2014} or adapting variants of maximum parsimony and neighbor joining \citep{NeighborNet2004,Jin2007}. 
Several others are model-based methods that are designed to infer various features of a species networks from data generated by a network multispecies coalescent model. These include,
for example, the methods implemented in Phylonet 
\citep{than2008phylonet,  wen2018inferring} as well as SNaQ \citep{Solis2016, solis2017phylonetworks} and NANUQ \citep{allman2019}.
Now that network-based Markov models of DNA sequence evolution have been developed (see e.g. \cite[\S 3.3]{Nakhleh2011}), it seems natural to use these models in order to add other model-based techniques to the set of tools for network inference. 
%
However, in order to consistently infer a parameter using a model-based approach, that parameter must be identifiable from some feature of the model. The question of parameter identifiability is significant and has been explored for several different phylogenetic models. For example, there are numerous identifiability results for tree-based Markov models \citep{Allman2011Identifiability,Allman2006,Chang1996,Rhodes2012} and there are similar results for networks 
that provide the theoretical justification for methods such as SNaQ \citep{solis2020identifiability} and NANUQ \citep{banos2019} mentioned above.  
In this work, we explore the identifiability of the network parameter in network-based Markov models.


Formally, network-based Markov models are parameterized families of probability distributions on $n$-tuples of DNA bases. The parameterization is derived by modeling the process of DNA sequence evolution along an $n$-leaf leaf-labelled topological network, which we call the \emph{network parameter} of the model. 
Given an $n$-taxa sequence alignment, a probability distribution in a network-based Markov model specifies
the probability of observing each of the possible $4^n$ site-patterns  at a particular site.  Indeed, in a model-based approach, an $n$-taxa sequence alignment is usually regarded as an observation of $n$ independent and identically distributed site-patterns. A sequence alignment can therefore be viewed as an approximation of a probability distribution, with the probability for each site-pattern being proportional to the number of times it appears in the alignment.  Given a collection, or class, of network-based Markov models, the network parameter
is \emph{identifiable} if any expected site pattern probability distribution $p$ in the model belongs to at most one model in the class.
Identifiability, as just defined, is very strong and certainly not satisfied for any reasonable collection of 
models.  
Thus, in practice, one often aims at proving that a parameter is \emph{generically identifiable}. If the network parameter of a class of models is generically identifiable then a probability distribution $p$ from one of the models \emph{almost surely} belongs to no other model in the class.

The generic identifiability of
the tree and network parameters of several phylogenetic models has been shown by adopting techniques from algebraic geometry
\citep{Allman2011Identifiability,Gross2017DistinguishingPN,hollering2020Identifiability,Long2017Identifiability}. These results apply to several types of mixture models, network models, and multispecies coalescent models. Even though tree-based Markov models of sequence evolution are naturally defined on rooted trees, in many of these works, the tree parameter is assumed to be an unrooted tree. The reason for this is that given an expected site pattern probability distribution from a tree-based Markov model, the location of the root of the tree is not identifiable (see, for example, Section 8.5 in \cite{SempleSteel} or Chapter 15 in \cite{sullivant2018algebraic}). Similarly, with network-based Markov models, even though we define the models on rooted networks, we will only be able to establish generic identifiability when the network parameter is assumed to be a \emph{semi-directed} network. Semi-directed networks are unrooted versions of rooted networks, which retain information about which vertices are reticulation vertices (and which edges are reticulation edges). In \cite{Gross2017DistinguishingPN}, algebraic techniques were used to show that the network parameter is generically identifiable when the underlying Markov process is subject to the Jukes-Cantor (JC) transition matrix constraints and the network parameter is assumed to be a semi-directed network with exactly one cycle of length at least four. Recently, in \cite{hollering2020Identifiability}, this 
result was extended using an algebraic matroid approach to include the Kimura 2-parameter and 
Kimura 3-parameter constraints (K2P, K3P). 

\begin{theorem}
\label{thm: oneRetic}\citep{Gross2017DistinguishingPN,hollering2020Identifiability}
The network parameter of a network-based Markov model under the Jukes-Cantor \citep{Gross2017DistinguishingPN}, Kimura 2-parameter \citep{hollering2020Identifiability}, or Kimura 3-parameter \citep{hollering2020Identifiability} constraints is generically identifiable with respect to the class of models where the network parameter is an $n$-leaf  semi-directed network with exactly one undirected cycle of length of at least four.
\end{theorem}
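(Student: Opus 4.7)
The plan is to translate the identifiability question into algebraic geometry. Each $n$-leaf semi-directed network $N$ in the class, together with one of the group-based constraints (JC, K2P, K3P), yields a polynomial parameterization $\phi_N\colon \Theta_N \to \mathbb{R}^{4^n}$ whose Zariski closure of image is an algebraic variety $V_N$. Generic identifiability is equivalent to showing that $V_N \neq V_{N'}$ whenever $N$ and $N'$ are non-isomorphic networks in the class, because two varieties that differ must intersect in a proper (hence measure-zero) subvariety of either one. Since the underlying models are all group-based, my first step would be to pass to Fourier (Hadamard) coordinates, in which $\phi_N$ becomes a sum over the trees displayed by $N$ of monomials in edge parameters. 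This is the standard simplification that makes computations on these varieties tractable.

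Next, I would reduce the global problem to a four-leaf analysis. Because $N$ has exactly one undirected cycle of length at least four, each choice of four leaves whose paths meet the cycle suitably gives an induced four-leaf semi-directed network that either retains the reticulation or collapses to an unrooted tree. The plan is to recover $N$ from the collection of its four-leaf restrictions: the leaves whose restrictions are non-tree-like identify the cycle and the reticulation location, and the remaining quartets fix the rest of the topology in essentially the same way as in the classical quartet-based reconstruction of unrooted trees. Thus generic identifiability at the four-leaf level would propagate to arbitrary $n$.

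At the four-leaf level, there are only finitely many network topologies to compare. For the JC case I would follow the strategy of \cite{Gross2017DistinguishingPN} and produce explicit polynomial invariants in Fourier coordinates that vanish on $V_{N}$ but not on $V_{N'}$ for each pair of distinct four-leaf networks; this is a finite symbolic computation once the Fourier monomial structure is written out. For K2P and K3P, where the ideals are considerably more complicated, I would follow \cite{hollering2020Identifiability} and instead compare the algebraic matroids of the $V_N$, which are combinatorial invariants extractable by a Jacobian rank computation on the parameterization, and check topology-by-topology that distinct four-leaf networks have distinct matroids.

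The hard part will be the four-leaf base case, specifically ensuring that networks differing only by the placement of the reticulation vertex along the $k$-cycle (for $k=4$) actually produce distinguishable varieties under each of the three group-based constraints. These are the pairs most at risk of yielding the same variety, because they display the same unrooted tree set with only subtly different mixture weights. Finding a single distinguishing invariant (JC) or verifying a matroid mismatch (K2P, K3P) in each of these borderline cases is where the bulk of the technical work lies, and where the $k \geq 4$ hypothesis is genuinely used, since for $k = 3$ (a triangle) the displayed-tree structure degenerates and no such invariant exists.
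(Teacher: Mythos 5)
You should first note that the paper does not actually prove Theorem~\ref{thm: oneRetic}: it is imported verbatim from \cite{Gross2017DistinguishingPN} (for JC) and \cite{hollering2020Identifiability} (for K2P and K3P) and is used as a black box in the final case of the proof of Lemma~\ref{lem:inductiveProof}. So the relevant comparison is with those two papers and with the strategy this paper uses for its generalization, Theorem~\ref{thm: main}. Measured against that, your outline is faithful: recasting generic identifiability as distinguishability of the varieties $\V_N$ (Proposition~\ref{prop: identifiability}), passing to Fourier coordinates, reducing to $4$-leaf restrictions, and settling the finitely many $4$-leaf cases by explicit invariants under JC and by algebraic matroids under K2P/K3P is exactly the division of labour in the cited works. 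Your identification of the hard base case ($4$-cycles differing only in the placement of the reticulation vertex) and of why the hypothesis $k\geq 4$ is essential (all $3$-leaf triangle networks have identical ideals) are both correct.

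The one place your argument would not survive as written is the lifting step from four leaves to $n$ leaves. You frame it as ``recover $N$ from its four-leaf restrictions, as in quartet reconstruction of unrooted trees.'' That framing fails here because distinct $4$-leaf semi-directed networks are \emph{not} all mutually distinguishable: besides the triangle degeneracy you mention, under the JC constraints each triangle network variety is actually \emph{contained} in several of the $4$-cycle network varieties (see Table~\ref{tab:InclusionOverview} and the discussion in the final section), and triangles do arise as restrictions of networks in this class. Consequently, two distinct $n$-leaf networks can have genuinely different collections of induced $4$-leaf networks and yet a naive set-theoretic comparison of those collections proves nothing. What the cited proofs (and this paper, via Lemma~\ref{lem:subnetVarieties}) actually use is the asymmetric statement: if $\V_{\N_1|_A}\not\subseteq\V_{\N_2|_A}$ for some $4$-set $A$, then $\V_{\N_1}\not\subseteq\V_{\N_2}$. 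The propagation to general $n$ therefore requires a combinatorial case analysis that, for every pair of distinct networks in the class, exhibits a witness set $A$ for one direction of non-containment and possibly a different witness set $B$ for the other direction, rather than a single quartet that ``identifies'' the network. Your building blocks are the right ones, but the reduction needs to be restructured around these one-directional containments to close the argument.
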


Still, these identifiability results only apply for networks with a single reticulation vertex.
In this paper, we  prove the following, extending the results to triangle-free, level-1 semi-directed networks, that is, triangle-free semi-directed networks where every undirected cycle 
contains a single reticulation vertex.


\begin{theorem}
\label{thm: main}
The network parameter of a network-based Markov model under the Jukes-Cantor, Kimura 2-parameter, or Kimura 3-parameter constraints is generically identifiable with respect to the class of models where the network parameter is an $n$-leaf triangle-free, level-1 semi-directed network with $r \geq 0$ reticulation vertices.
\end{theorem}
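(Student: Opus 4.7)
The plan is strong induction on the number of reticulations $r$. The base cases are already in hand: $r=0$ is the classical generic identifiability of the unrooted tree topology for group-based models, and $r=1$ is Theorem~\ref{thm: oneRetic}. For the inductive step I would fix $r\geq 2$, assume the statement for all smaller reticulation numbers, and start from a generic distribution $p$ arising from a triangle-free, level-1 semi-directed network $N$ with $n$ leaves and $r$ reticulations.

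Two ingredients drive the argument. The first is the block structure of $N$: because $N$ is level-1, its biconnected components are either tree edges or simple cycles, each cycle containing exactly one reticulation vertex, and triangle-freeness forces every cycle to have length at least $4$. Thus $N$ is organized as a block-cut tree with one cycle per non-trivial blob. The second is marginalization: for every leaf subset $S$, the marginal of $p$ on $S$ is an expected distribution coming from the network $N|_S$ obtained by restricting $N$ to $S$ and suppressing the resulting degree-two vertices. When $S$ can be chosen so that $N|_S$ has at most one cycle and that cycle has length at least $4$, Theorem~\ref{thm: oneRetic} generically recovers $N|_S$. I would first aggregate this information across many small leaf subsets---in the spirit of quartet-based tree reconstruction---to recover the block-cut tree of $N$ and the tree edges joining the blobs. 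Then, to recover the reticulation structure itself, I would single out a \emph{peripheral} cycle $C$ of $N$ (one whose removal leaves exactly one non-trivial component on its non-leaf side), isolate $C$ by another carefully chosen leaf subset, and apply Theorem~\ref{thm: oneRetic} to identify the length of $C$, the position of its reticulation vertex, and its attachment points to the rest of $N$. Contracting $C$ to a vertex yields a triangle-free level-1 network $N'$ with $r-1$ reticulations, to which the inductive hypothesis then applies; reattaching $C$ recovers $N$.

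The main obstacle is showing that this induction preserves \emph{genericity}. Contraction of $C$ imposes algebraic relations among the numerical parameters of $N'$, so a generic point of the $N$-parameter space need not obviously project to a generic point of the $N'$-parameter space. I would address this by working at the level of parameterization maps, verifying that the fiber of the contraction over a generic $N'$-point is cut out only by equations involving the already-identified parameters of $C$, and then applying a Jacobian rank or algebraic matroid computation in the spirit of \cite{hollering2020Identifiability} to conclude that the pushforward of a Zariski-dense open set remains Zariski-dense. A secondary subtlety, which I expect to require a careful combinatorial argument on the block-cut tree, is that the isolating leaf subsets invoked above must always exist for every triangle-free level-1 network with $r\geq 2$; triangle-freeness is what guarantees that one can always avoid producing a cycle of length less than $4$ in the relevant restriction, making Theorem~\ref{thm: oneRetic} applicable.
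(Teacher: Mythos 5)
Your proposal is structured quite differently from the paper's proof --- you induct on the number of reticulations $r$ and try to \emph{reconstruct} $N$ by isolating and contracting a peripheral cycle, whereas the paper inducts on the number of leaves $n$ and proves the asymmetric variety statement of Lemma~\ref{lem:inductiveProof} (that $\V_{\N_1}\not\subseteq\V_{\N_2}$ whenever $N_1\neq N_2$ and $r(N_1)\geq r(N_2)$), reducing along a common non-trivial split and handling the remaining cases with the $4$-leaf computations of Lemma~\ref{lem:buildingBlocks} and the combinatorial Lemmas~\ref{lem:4CycleSupersetImpliesRefinedCycles} and~\ref{lem:CommonORConflict}. The difference is not just stylistic: two steps of your plan do not go through as described. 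First, your proposal to ``aggregate information across many small leaf subsets'' to recover the block-cut tree presupposes that the $4$-leaf restrictions $N|_S$ can be read off from the marginals of $p$. They cannot: restrictions of a triangle-free level-$1$ network routinely produce triangles and double-triangles, and these are \emph{not} all pairwise distinguishable from trees under JC/K2P/K3P --- Table~\ref{tab:InclusionOverview} only gives one-sided non-containments between, e.g., quartet trees and triangle networks, so a marginal arising from a tree restriction may also lie in a triangle-network model. This is precisely the obstacle the paper's combinatorial machinery (conflicting splits, cycle refinement) is built to circumvent by relying only on the induced $4$-cycles, which \emph{are} distinguishable from everything else.

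Second, the contraction step is not well-founded. If $N'$ is obtained by contracting the cycle $C$ to a vertex, the distribution $p\in\mathcal{M}_N$ does not lie in $\mathcal{M}_{N'}$, nor does any marginal of it in an obvious way, so there is no distribution to which the inductive hypothesis for $r-1$ reticulations can be applied; the difficulty you flag as ``preserving genericity'' is really the absence of the object itself. The only legitimate version of this move is a leaf restriction across a cut edge (the operation $N/X$ of Section~\ref{sec:combinatorial}), and then one must (i) show that the two candidate networks being compared actually share such a split --- which is the content of Lemmas~\ref{lem:4CycleSupersetImpliesRefinedCycles} and~\ref{lem:CommonORConflict} and is not automatic --- and (ii) lift non-identifiability from the restriction to the full networks, which the paper does at the level of ideals via Lemma~\ref{lem:subnetVarieties} (imported from Proposition 4.3 of \cite{Gross2017DistinguishingPN}); your proposed ``Jacobian rank or algebraic matroid computation'' names a technique but does not supply this argument. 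Finally, note that after restricting across a split the two sides generally carry different reticulation numbers, so an induction on $r$ within the class of networks with exactly $r$ reticulations does not furnish the comparisons you need; this is why the paper's inductive statement is the asymmetric one with the hypothesis $r(\N_1)\geq r(\N_2)$, applied twice to obtain distinguishability.
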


\noindent 
%
To illustrate the implications of Theorem \ref{thm: main}, suppose that $p$ is an expected site pattern probability distribution that belongs to a Markov model on a \emph{rooted} phylogenetic network $N$. If it is known that $N$ is level-1 with triangle-free skeleton and $r$ reticulation vertices, then from $p$, it is possible (almost surely) to determine the unrooted skeleton of $N$ as well as which vertices (edges) are hybrid vertices (edges).


Our proof is largely combinatorial, as we are able to 
use the algebraic results for small networks obtained
in \cite{Gross2017DistinguishingPN} and \cite{hollering2020Identifiability}, in addition to a few new ones, as building blocks.
We begin in Section \ref{sec: preliminaries} by describing more precisely the models we consider as well as the algebraic approach to establishing generic identifiability. 
In Section \ref{sec: Varieties and distinguishability},
we prove a few novel results about the algebra of 4-leaf level-1 networks and collect the other required algebraic results. In Section \ref{sec:combinatorial}, we prove several combinatorial properties of level-1 phylogenetic semi-directed networks that we will need to prove the main result.
Finally, with these results in place, in Section \ref{sec: inductive proof}, we prove Theorem \ref{thm: main}.

\section{Preliminaries} 
\label{sec: preliminaries}

We begin this section by defining the graph theoretic terminology that
we will use throughout the paper.  
Then, in Section \ref{sec: Network based Markov models}, we introduce network-based Markov models on rooted networks, and in Section \ref{sec: semi-directed network models}, we show that we can also define  a network-based Markov model on a semi-directed network.  Finally, we describe the connection between network-based Markov models and algebraic varieties and formally define what it means for two networks to be \emph{distinguishable} and precisely what it means for the network parameter of a class of models to be \emph{generically identifiable}.

\subsection{Graph Theory Terminology}
\label{sec: graph theory terms}

A \emph{(rooted binary) phylogenetic network} ${N}$  on a set of leaves $\mathcal X$ is a rooted acyclic directed graph with no edges in parallel such that the root has out-degree two, each vertex with out-degree zero has in-degree one, the set of vertices with out-degree zero is $\mathcal X$, and all other vertices either have in-degree one and out-degree two, or in-degree two and out-degree one. The \emph{skeleton} of a phylogenetic network is the undirected graph that is obtained from the network by removing edge directions.

A vertex is a \emph{tree vertex} if it has in-degree one and out-degree two.
A vertex is a \emph{reticulation vertex} if it has in-degree two and out-degree one,
and the edges that are directed into a reticulation vertex are called \emph{reticulation edges}. Let $r(\N)$ denote the number of reticulation vertices in network $\N$. Since~$\N$ is binary, it can be shown that it has exactly $2|\mathcal{X}|+2r(\N)-1$ vertices and $|\mathcal{X}|+2r(\N)-1$ internal vertices.
A rooted phylogenetic network with no reticulation vertices is a \emph{rooted phylogenetic tree}.


The \emph{level} of a phylogenetic network is the maximum number of
reticulation vertices in a biconnected component of the network. Of particular interest in this paper are \emph{level-1 networks}, which can also be characterized as phylogenetic networks where no vertex belongs to more than one cycle in the network's skeleton \citep{RosselloValienteGalled}.

More specifically, we will be concerned with a particular kind of level-one network, in which only the reticulation edges are directed. 

\begin{definition}
\label{defn: semidirected}
A \emph{semi-directed network} is a mixed graph  obtained from a phylogenetic network by undirecting all non-reticulation edges, suppressing all vertices of degree two, and identifying parallel edges.
\end{definition}

\noindent Note that deciding whether a mixed graph, a graph with some edges directed and others undirected, is a semi-directed network can be done in quadratic time in the number of edges (Corollary 4 of~\cite{huber2019rooting}).  The \emph{unrooted skeleton} of a phylogenetic network is the skeleton of its associated semi-directed network (including leaf labels).

In a semi-directed network, the reticulation vertices are the vertices of indegree two and the level is defined the same as for a rooted phylogenetic network. A \emph{triangle-free level-1 semi-directed network} is a level-1 semi-directed network where every cycle in the unrooted skeleton has length greater than three. We will also refer to level-1 semi-directed networks with exactly one reticulation vertex as \emph{$k$-cycle networks}, where $k$ is the length of the unique cycle
in the unrooted skeleton.

We finish these preliminaries with one additional bit of graph theory terminology that will be useful throughout. Let $A\cup B$ be a partition of $\leafSet$ with $A,B$ non-empty. 
An edge $e$ in  a network $\N$ \emph{separates} $A$ and $B$ if every path (not necessarily directed) between any $a\in A$ and $b\in B$ contains $e$.
If $e$ separates $A$ and $B$ then we call $e$ a \emph{cut-edge} and we say $\N$ has an \emph{$A-B$ split}.

\subsection{Network based Markov models }
\label{sec: Network based Markov models}


We begin this section by describing a model of DNA sequence evolution
along 
an $n$-leaf rooted binary phylogenetic network. For the description below, we assume that the network belongs to the set of \emph{tree-child networks} \citep{Cardona2007}, which contains the set of level-one networks. In a tree-child network, every internal vertex has at least one child vertex that is either a tree vertex or a leaf.

Let $N'$
be an $n$-leaf phylogenetic network and let $\rho$ be the root of the network. \rev{Let $\mathcal S_4$ be the set of $4 \times 4$ (row) stochastic matrices and let $\Delta^d$ be the $d$th dimensional probability simplex, i.e. $\Delta^d:= \{ p \in \mathbb R^d \ : \ p \geq 0, \ \sum_{i=1}^{d} p_i = 1\} \subseteq \mathbb R^{d}$}.
We associate to each node $v$ of $N$ a random variable $X_v$ with state space $\{A,G,C,T\}$, corresponding to the four DNA bases. The nodes of the network, including the interior nodes, represent taxa, and the random variable $X_v$ is meant to indicate the DNA base at the particular site being modeled in the taxon at $v$.
 Now, let \rev{ $\boldsymbol{\pi}= (\pi_A,\pi_G,\pi_C,\pi_T) \in \Delta^3 \subset \mathbb{R}^4$} be the distribution at the root with $\pi_i = P(X_\rho = i)$, and  associate to each edge $e = uv$ of $N'$ a $4 \times 4$ transition matrix \rev{$M^e \in \mathcal S_4$} where the rows and columns are indexed by the elements of the state space. \rev{With $u$ a parent of $v$, the matrix} $M^e_{i,j}$ is equal to the conditional probability 
$P(X_v = j | X_u = i)$. When $N'$ is a rooted tree, the probability of observing a particular $n$-tuple at the leaves of $N'$ is straightforward to compute. 
Letting $V(N')$ be the vertex set of $N'$, we first consider an assignment of states
to the vertices of $N'$ by $\phi: V(N') \to \{A,G,C,T\}$ where $\phi(v)$ is the state of $X_v$. Then, under the assumption of a tree based Markov model, the probability of observing the assignment $\phi$ can be computed using the distribution at the root and the transition matrices. Specifically, letting $\Sigma(N')$
be the set of edges of $N'$, this probability is equal to 

$$
\pi_{\phi(\rho)}
\prod_{e=uv \in \Sigma(N')} \
M^e_{\phi(u),\phi(v)}.
$$
The probability of observing a particular assignment of states at the leaves can be obtained by marginalization, i.e. summing over all possible assignments of states to the internal nodes. 
In particular, if $\omega \in \{A, G, C, T\}^{|\mathcal X|}$ 
%
%
is an assignment of states to the leaves $\mathcal X$ of $N'$ and $\phi(\mathcal X)$ is the restriction of $\phi$  to the entries corresponding to the leaves of $N'$, the probability of observing $\omega$ is then
$$ \sum_{(\phi \ : \ \phi(\mathcal X) = \omega)} 
\pi_{\phi(\rho)}
\prod_{e=uv \in \Sigma(N')} \
M^e_{\phi(u),\phi(v)}.$$

When the rooted network $N'$ contains at least one cycle in its skeleton, there is no longer a unique path between each leaf and the root, and thus reticulation edge parameters are introduced. In this case, suppose $N'$ has $r$ reticulation vertices $v_1, \ldots, v_r$. 
Since each $v_i$ has in-degree two, 
there are two edges, $e^0_i$ and $e^1_i$, 
directed into $v_i$. Assign a parameter $\delta_i \in (0,1)$ to $e^1_i$ and the value $1-\delta_i$ to $e^0_i$. For $1 \leq i \leq r$, 
independently delete $e^0_i$, keeping $e^1_i$, with probability
$\delta_i$, otherwise, delete $e^1_i$ and keep $e^0_i$.
Intuitively, the parameter $\delta_i$ corresponds to the probability that a particular site was inherited along edge $e^1_i$. 
Encode this set of choices with a binary vector $\sigma \in \{0,1\}^r$ where a $0$ in the $i$th coordinate indicates that edge $e^0_i$ was deleted. Since $N'$ is assumed to be a tree-child network, after deleting the $r$ edges, the result is a rooted $n$-leaf tree $T_{\sigma}$. Since there are four DNA bases and $n$ leaves of the network, there are $4^n$ possible \emph{site-patterns}, or assignment of states, that could be observed
at the leaves of $N'$. The probability of observing the site-pattern $\omega$ is
%
%
 \begin{equation} \label{eq:probs}
p_{\omega}= \sum_{\sigma \in \{0,1\}^r}
\left ( \prod_{i=1}^r \delta_i ^{1-\sigma_i}(1-\delta_i)^{\sigma_i}\right ) 
\sum_{(\phi \ : \ \phi(\mathcal X) = \omega)} \pi_{\phi(\rho)} 
\prod_{e=uv \in \Sigma(T_{\sigma})} \
M^e_{\phi(u),\phi(v)} .
\end{equation}






\begin{figure}
  \begin{center}
    \includegraphics[width=\textwidth]{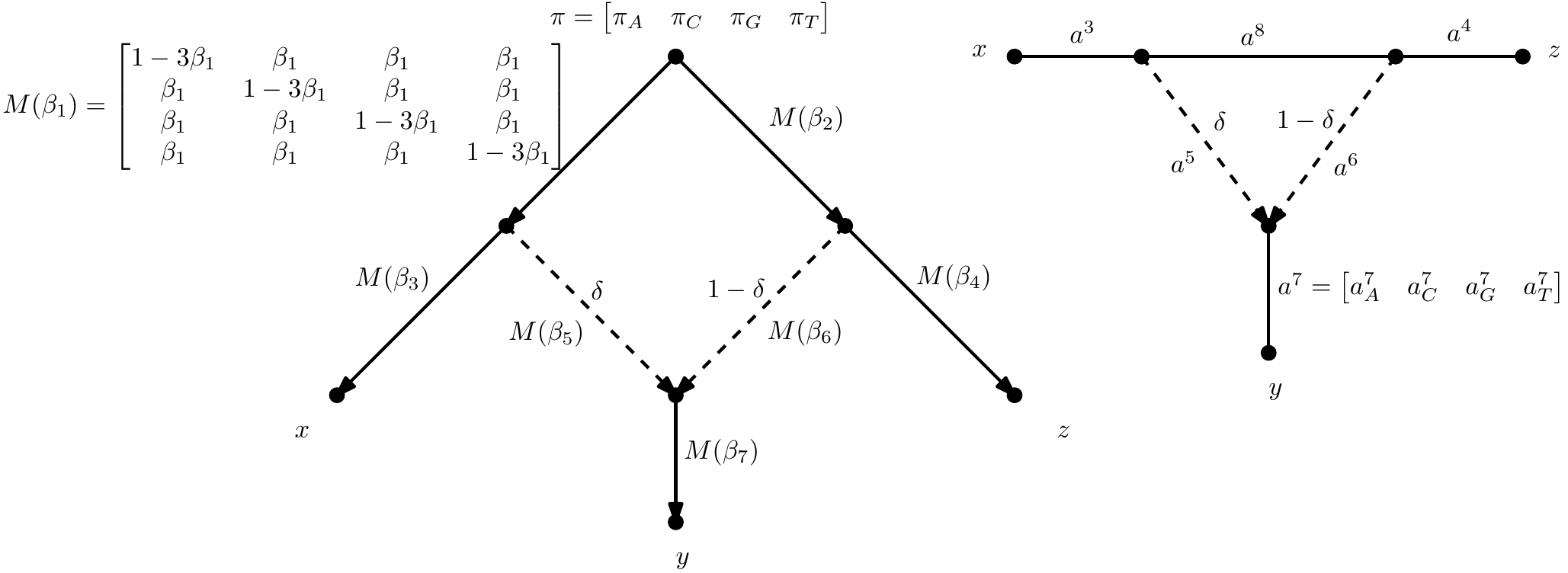}
  \end{center}
  \caption{\label{fig:parameters} On the left is an example of a phylogenetic network with stochastic transition matrices assigned to each edge and reticulation parameters assigned to the two reticulation edges; we denote the edge transition matrices using $M(\beta_i)$ rather than $M^{e_i}$ to indicate the dependence on the parameter $\beta_i$. The transition matrices all satisfy the Jukes--Cantor constraints. On the right is the semi-directed network obtained by unrooting the network on the left.  
  Each edge of the semi-directed network is labeled by a vector of Fourier parameters.
  Reticulation edges are represented by dashed edges.}
\end{figure}

While seemingly complicated, the above expression is a polynomial in the numerical parameters of the model: the root distribution, the entries of the transition matrices, and the reticulation edge parameters. 
 Thus the map defined by the network $N'$
$$\psi_{N'}: \theta_{N'} \to \Delta^{4^n -1},$$
from the numerical parameter space \rev{$ \theta_{N'}:= \Delta^3 \times (\mathcal S_4)^{|\Sigma(N')|} \times (0,1)^r$} to the probability
simplex  $\Delta^{4^n -1}$ is a polynomial map.  
The image of the map $\psi_{N'}$ is called
the \emph{model associated to $N'$}, denoted 
$\mathcal{M}_{N'}$. Note the model $\mathcal{M}_{N'}$ is the set of all possible probability distributions obtained by fixing the network $N'$ and varying the numerical parameters. See Figure~\ref{fig:parameters} for an example of a network with its numerical parameters.


When we place no restrictions on the entries of the transition matrices (other than that they are stochastic) the underlying substitution process is known as the \emph{general Markov model}. 
Network-based phylogenetic models with a general Markov substitution process are studied for example in \cite{casanellas2020rank}. 
However it is quite common in phylogenetics to consider models with 
additional constraints, effectively reducing the dimension of the parameter space $\theta_{N'}$.  For example, in the Kimura 3-parameter DNA substitution model, the 
root distribution is uniform and each transition matrix is assumed to have the 
following form, where the rows and columns are indexed by the DNA bases $A, G, C, T$,
$${\footnotesize \begin{pmatrix}
     \alpha & \beta  & \gamma & \delta \\
     \beta  & \alpha & \delta & \gamma \\
     \gamma & \delta & \alpha & \beta  \\
     \delta & \gamma & \beta  & \alpha
    \end{pmatrix}}. $$
%
In the Kimura 2-parameter model (K2P), and Jukes-Cantor models, additional restrictions are placed on the entries of the transition matrices ($\gamma = \delta$ for K2P and $\beta = \gamma = \delta$ for JC).
%
%
    %
    In order to not overload the word ``model,"  we will refer to these restrictions on the transition matrices as \emph{constraints}. For example, we will refer to the image of $\psi_{N'}$ under the Jukes-Cantor DNA substitution model as the \emph{model associated to $N'$ under the Jukes-Cantor constraints.} 
    
    We end this section on network-based Markov models by noting that there exist other natural extensions of tree-based Markov models. For example, in \cite{francis2018identifiability}, the authors consider a network model adapted from \cite{thatte2013reconstructing} and are able to establish identifiability for the entire class of tree-child networks. The stronger identifiability results come at the expense of some modeling flexibility, but the difference can illustrate the possible gains that can be made by considering different processes. 


\subsection{Semi-directed Network Models}
\label{sec: semi-directed network models}

In this section, we show how to associate a model $\mathcal{M}_N$ to a phylogenetic semi-directed network $N$ for the group-based models considered in this paper. We will see that for a given set of constraints (JC, K2P, K3P), if $N'$ is a phylogenetic network and $N$ is the semi-directed network attained from $N'$ as in Definition \ref{defn: semidirected}, then $\mathcal{M}_N = \mathcal{M}_{N'}$. 
We start by showing that the model associated to a rooted network $N$ does not depend on the location of the root. 
Then, we show that the associated model does not change if we suppress degree two vertices or remove parallel edges in the network. Thus, the phylogenetic semi-directed network $N$ contains all of the information necessary to recover $\mathcal{M}_{N'}$.

For a tree-based phylogenetic model under the Jukes-Cantor, Kimura 2-parameter, or Kimura 3-parameter constraints, we may relocate the root and suppress vertices of degree two without changing the underlying model (see, for example, Section 8.5 in \cite{SempleSteel} or Chapter 15 in \cite{sullivant2018algebraic}).
That we can relocate the root is easily observed since each of the transition matrices is symmetric and the root distribution is uniform, so that $\pi_iM_{i,j} = \pi_jM_{j,i}.$ 
To see that we may suppress vertices of degree two without changing the model, suppose the edges $e$ and $f$ are incident to a vertex of degree two and that the Markov transition matrices $M^e$ and $M^f$ satisfy the Jukes-Cantor, Kimura 2-parameter, or Kimura 3-parameter constraints. Then the transition matrix $M^eM^f$ will satisfy the same constraints, so we may suppress the vertex of degree two and assign this transition matrix to the newly created edge to obtain the same site pattern probability distribution from the model. These results imply that the location of the root of the rooted tree parameter in a tree-based Markov model cannot be identified from an expected site-pattern in the model. Or, viewed another way, these results mean that we can associate a tree-based Markov model to an unrooted tree and consider the tree parameter in a tree-based Markov model to be an unrooted tree.

A similar result holds for the network-based Markov models considered in this paper. For a fixed choice of parameters in a network model, the associated site pattern probability distribution is the weighted sum of site-pattern probability distributions from the constituent tree models. The weights are determined by the reticulation edge parameters. Since relocating the root in each of the trees does not affect the tree models, the network model will remain the same if we relocate the root of the network and redirect the edges in any way that preserves the direction of the reticulation edges. For example, in the rooted network in Figure \ref{fig:parameters}, we could suppress the existing root vertex, subdivide the edge directed into the leaf vertex labeled by $z$ to create a new root, and then redirect edges away from the new root in a way that preserves the directions of the reticulation edges. 
%

If a child of the root vertex is a reticulation vertex, then unrooting and suppressing the root will may result in a pair of parallel reticulation edges in the semi-directed network. However, under the JC, K2P, and K3P constraints, we may identify any pair of parallel edges without altering the model. The reason for this is that the sets of transition matrices under each of these constraints are closed under convex sums. So if a network contains a set of parallel reticulation edges with transition matrices $M^e$ and $M^f$, we can replace these edges with a single edge with transition matrix $\delta M^e + (1 - \delta)M^f$ and obtain the same site-pattern probability distribution in the model, where~$\delta$ is the reticulation edge parameter for the edge~$e$.  

Together, these arguments give us the following proposition.

\begin{prop} 
\label{prop:semidirectednetworks}
Let $N_1'$ and $N_2'$ be two tree-child phylogenetic networks with
associated phylogenetic semi-directed networks $N_1$ and $N_2$. Under the JC, K2P, or K3P constraints, if $N_1 = N_2$ then $\mathcal{M}_{N_1'} = \mathcal{M}_{N_2'}$.
\end{prop}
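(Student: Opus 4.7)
The plan is to reduce the statement to three elementary invariance properties of the model that are already argued in the preceding discussion, and then to observe that any two rooted preimages of the same semi-directed network are related by a sequence of these invariance-preserving operations.

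First I would set up a chain of equalities
\[
\mathcal{M}_{N_1'} \;=\; \mathcal{M}_{N} \;=\; \mathcal{M}_{N_2'},
\]
where the middle term is the model obtained by picking \emph{any} consistent re-rooting of the shared semi-directed network $N$ (that is, subdivide an undirected edge to create a root, direct the remaining undirected edges away from it in a manner that respects the fixed reticulation edge directions, and re-introduce parallel edges at reticulation vertices where required). To justify the outer equalities it suffices to verify that the following three reductions, each going from a rooted tree-child network to either a simpler rooted network or to a semi-directed one, leave the image of $\psi$ unchanged under the JC, K2P and K3P constraints:

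\textbf{(i) Re-rooting.} Since the allowed transition matrices $M^e$ are symmetric and the root distribution $\boldsymbol\pi$ is uniform, the detailed-balance equation $\pi_i M^e_{ij} = \pi_j M^e_{ji}$ lets one reverse any non-reticulation edge without changing the joint distribution on any tree $T_\sigma$ appearing in \eqref{eq:probs}. By the weighted-sum decomposition over $\sigma \in \{0,1\}^r$, the whole network distribution is unchanged provided the reticulation edge orientations are preserved.
\textbf{(ii) Suppression of degree-two vertices.} If $e,f$ meet at a degree-two vertex, then the product $M^e M^f$ remains in the JC/K2P/K3P class (these classes are closed under multiplication, being the group algebras of subgroups of $(\mathbb{Z}/2)^2$), and assigning it to the merged edge reproduces the same leaf distribution in each $T_\sigma$.
\textbf{(iii) Identification of parallel reticulation edges.} If two reticulation edges $e^0,e^1$ into the same reticulation vertex are parallel with matrices $M^e,M^f$ and parameter $\delta$, then replacing them by a single edge with matrix $\delta M^f + (1-\delta) M^e$ again lies in the JC/K2P/K3P class, by convexity of these classes; summing \eqref{eq:probs} over the two values of the corresponding coordinate of $\sigma$ shows the leaf distribution is unchanged.

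Next I would argue that any rooted tree-child preimage of $N$ is obtainable from the canonical re-rooting $\mathcal{M}_N$ by a finite sequence of the inverses of (i)--(iii): move the root along the undirected skeleton using (i), re-insert degree-two vertices via inverses of (ii) on arbitrary edges, and, if the chosen root happens to be a parent of a reticulation vertex, split a single reticulation edge into a pair of parallel reticulation edges using an inverse of (iii) by writing the transition matrix as a convex combination (e.g.\ the trivial combination taking $M^e=M^f$ equal to the original matrix, with $\delta$ arbitrary). Applying this to both $N_1'$ and $N_2'$ gives $\mathcal{M}_{N_1'} = \mathcal{M}_N = \mathcal{M}_{N_2'}$.

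The main obstacle I foresee is bookkeeping around reticulation vertices during re-rooting: when the chosen root lies on an edge leading into a reticulation, one must be careful that (i) does not attempt to reverse a reticulation edge, and (iii) must be invoked in its inverse direction to produce the parallel edges demanded by Definition~\ref{defn: semidirected}. This is not deep, but it requires a careful case analysis to confirm that every tree-child rooted preimage of $N$ is reachable using only admissible moves, so the equalities $\mathcal{M}_{N_1'} = \mathcal{M}_N = \mathcal{M}_{N_2'}$ go through.
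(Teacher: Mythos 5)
Your proposal is correct and follows essentially the same route as the paper: both establish the three invariances (re-rooting via symmetry of the transition matrices and uniformity of the root distribution, suppression of degree-two vertices via closure under matrix products, and identification of parallel reticulation edges via closure under convex combinations) and then conclude that the model depends only on the associated semi-directed network. Your treatment is somewhat more explicit about reconstructing an arbitrary rooted preimage by inverse moves, but the underlying argument is the same.
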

%
%
Thus, the model associated to a rooted phylogenetic network is entirely determined by the associated phylogenetic semi-directed network. Although we note that the arguments above are specific to the JC, K2P, and K3P constraints, and similar arguments might not work for other network-based Markov models.
%
%

Proposition \ref{prop:semidirectednetworks} suggests that we may regard the network parameter of a network-based Markov model as a phylogenetic semi-directed network. Given a phylogenetic semi-directed network $N$, we can determine the model $\mathcal{M}_{N}$ by choosing any rooted network $N'$ for which $N$ is the associated semi-directed network and defining $\mathcal{M}_{N} := \mathcal{M}_{N'}$. 
%
%
Therefore, for the rest of this paper, we will assume that the network
parameter of each model is an $n$-leaf phylogenetic semi-directed network. 
Indeed, this is necessary to obtain any identifiability results, as the location of the root in a rooted network is not identifiable from an expected site pattern probability distribution in the model.



\subsection{Markov Models as Algebraic Varieties}
\label{sec: Markov Models as Algebraic Varieties}

In this paper, we prove generic identifiability using tools from combinatorics and computational algebraic geometry. 
In order to understand ${\mathcal M}_N = Im(\psi_N)$ within an algebraic-geometric framework, we consider the complex extension of $\psi_N$, which we denote as $\psi'_N$. 

Let 
$\mathbb C[p_{\omega}: \omega \in \{A,G,C,T\}^n]$ 
%
%
be the set of all polynomials on $4^n$ variables with coefficients in $\mathbb C$.  The \emph{ideal associated to $\mathcal M_N$} is the set of polynomials that vanish on the image of $\psi'_N$, i.e.

$$ \mathcal I_{N} := \{ f \in \mathbb{C}[p_{\omega}: \omega \in \{A,G,C,T\}^n] \ : \ f({p}) = 0 \ \forall {p} \in Im(\psi_N')\}.$$

The elements of $\mathcal I_{N}$ are called \emph{phylogenetic invariants}.  Each polynomial in $\mathcal I_N$ vanishes on ${\mathcal M}_N$, that is, each polynomial yields
zero when we substitute the entries of any probability distribution ${p} \in {\mathcal M}_N$. Phylogenetic invariants are the defining polynomials of the variety $\mathcal V_{N}$ associated to $\mathcal M_N$, which we will refer to as the \emph{network variety}. Specifically, 
$$
\mathcal V_{N} 
:= \mathcal V(\mathcal I_{N})=
\{ p \in \mathbb C^{4^n} \ : \ f(p) =
0 \text{ for all } f \in \mathcal I_{N} \}.$$


Elements of $\mathcal I_N$ are polynomial relationships among the entries of $p$ that hold for all distributions $p \in \mathcal M_N$.  If we look back at equation \eqref{eq:probs}, it is reasonable to assume that such relationships may be quite complicated since each probability coordinate $p_{\omega}$ is parameterized by a polynomial that is the sum of $2^r4^{(n + 2r -1)}$ terms. 
Because of this, we perform a linear change of coordinates on both the parameter space
and the image space called the Fourier-Hadamard transform \citep{Evans1993, hendy1996Complete}. 
After the transform, the invariants
are expressed in the ring of \emph{$q$-coordinates}, 
$$\mathbb{C}[q_{\omega}: \omega \in \{A,G,C,T\}^n].$$
As an example of how the Fourier-Hadamard simplifies the resulting algebra,
for a tree-based phylogenetic model, the parameterization of each 
$q$-coordinate is a monomial in the \emph{Fourier parameters} 
and the phylogenetic tree ideal is generated by binomials.
Working in the transformed coordinates is common \rev{when working with group-based models} and it
is what enables us to compute the required network invariants. 
While the details of the Fourier-Hadamard transform are outside the scope of this paper,
we give here a brief description of how to parametrize a phylogenetic network model
under the Jukes-Cantor, Kimura 2-parameter, and Kimura 3-parameter constraints.
More details can be found in \cite{Toric2005} and Chapter 15 of \cite{sullivant2018algebraic}.  

First, we will describe how to determine the Fourier parametrization
of a phylogenetic tree, $T$. As in \cite{Toric2005} and \cite{sullivant2018algebraic}, we begin by identifying the four DNA bases with elements of the group
$\mathbb{Z}_2 \times \mathbb{Z}_2$ as follows $A = (0,0)$, $G = (1,0)$, $C = (0,1)$ and $T = (1,1)$.
Under the Kimura 3-parameter constraints, there are then four Fourier parameters 
associated to each edge $i$, denoted as $a^i_A$, $a^i_G$, $a^i_C$, and $a^i_T$ (after transformation, the stochastic condition on the transition matrices forces $a^i_A = 1$).
Letting $\omega$ be the site pattern $(g_1, g_2, \ldots, g_n)$, the parametrization is then given by

 \begin{displaymath}
   q_{\omega}  = \left\{
     \begin{array}{cl}
       \displaystyle \prod_{e \in \Sigma(T)} a^{e}_{\sum_{j \in Y} g_j} & \text{     if }\displaystyle \sum_{j = 1}^{n} g_j= 0 \\
       & \\
       0 & \text{     otherwise.}
     \end{array}
   \right.
\end{displaymath} where  $\Sigma(T)$ is the set of edges of $T$ and 
$Y - Z$ is the split induced by $e$ in $T$.
All addition is in the group $\mathbb{Z}_2 \times \mathbb{Z}_2$.

Notice that this is a monomial, in which there is one parameter associated to each edge of the
tree $T$. In order to parametrize a phylogenetic network, we take the sum of the monomials
corresponding to all $2^r$ trees created by removing reticulation edges from the network. 
The monomials are weighted by the corresponding reticulation edge parameters. 

\subsection{Generic identifiability} 

A model-based approach to network inference selects the model
from a set of candidate models that best fits the observed data according to some criteria and returns the network parameter of this model. In our setting, the observed data are the aligned DNA sequences of the taxa under consideration, from
which we construct the observed site pattern probability distribution. 
In the ideal setting, if we had access to infinite noiseless data generated by a network-based Markov model, then the observed site pattern distribution would be equal to an expected site pattern distribution in the model. Inferring the correct network parameter in this case would be as simple as determining which model from a set of candidate models the site pattern probability distribution belongs to.
However, even in this idealized setting, it may be that the observed site pattern distribution belongs to the models corresponding to two distinct networks, making it impossible to determine which network produced the data. Thus, a desirable theoretical property for a class of network models is that each distribution in one of the models belongs to no other model, or that the network parameter be \emph{identifiable}.




Let $\mathcal{N}$ be a set of leaf-labelled networks. More formally, the condition that the network parameter is identifiable with respect to a collection of models $\{\mathcal M_N\}_{N\in \mathcal N}$ is equivalent to the condition that for all distinct
$N_1, N_2 \in \mathcal{N}$, $\mathcal{M}_{N_1} \cap \mathcal{M}_{N_2} = \emptyset$, meaning the two models 
do not intersect.
Since this notion of identifiability is rather strong, the more practical notion of  \emph{generic identifiability} is more commonly explored.

\begin{definition}
\label{defn: genericallyidentifiable}
Let $\{\mathcal M_N\}_{N\in \mathcal N}$ be a class of phylogenetic network models. The network parameter  is \emph{generically identifiable with respect to the class} 
$\{\mathcal M_N\}_{N\in \mathcal N}$ if given any two distinct
$n$-leaf networks
$N_1, N_2 \in \mathcal N$, 
the set of numerical parameters in $\theta_{N_1}$ that 
$\psi_{{N}_1}$ maps into $\mathcal{M}_{{N}_2}$ is a set of Lebesgue measure zero.
\end{definition}

To establish generic identifiability, 
we can use algebraic geometry by
considering the family of irreducible algebraic varieties
$\{\mathcal V_N\}_{N \in \mathcal{N}}$,
where $\mathcal V_N$ is the network variety associated to $N$.
Generic identifiability is then closely related to 
the concept of \emph{distinguishability}.

\begin{definition}\citep{Gross2017DistinguishingPN}
 Two distinct $n$-leaf networks $N_1$ and $N_2$ are \emph{distinguishable} if $\V_{N_1} \cap \V_{N_2}$  is a proper subvariety of $\V_{N_1}$ and $\V_{N_2}$,
 that is, $\V_{N_1} \not\subseteq \V_{N_2}$ and $\V_{N_1} \not\supseteq \V_{N_2}$.
 Otherwise, they are \emph{indistinguishable}. 
\end{definition}

\begin{prop}
\cite[Proposition 3.3]{Gross2017DistinguishingPN}
\label{prop: identifiability}
Let $\{\mathcal M_N\}_{N\in \mathcal N}$ be a class of phylogenetic network models. The network parameter of a phylogenetic network model is generically identifiable with respect to $\{\mathcal M_N\}_{N\in \mathcal N}$ if given any two distinct $n$-leaf networks  $N_1, N_2 \in \mathcal N$, the networks $N_1$ and $N_2$  are distinguishable.
\end{prop}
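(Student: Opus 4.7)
The plan is to unpack what distinguishability buys us in terms of the parameterizing map $\psi_{N_1}$. Recall that by construction the (Zariski closure of) the image of $\psi_{N_1}$ is exactly the network variety $\V_{N_1}$, and likewise $\mathcal{M}_{N_2}\subseteq \V_{N_2}$. Fix two distinct networks $N_1,N_2\in \mathcal{N}$ that are distinguishable, and write
\[
S_{1,2} := \{\theta \in \theta_{N_1} : \psi_{N_1}(\theta) \in \mathcal{M}_{N_2}\}.
\]
Since $\mathcal{M}_{N_2}\subseteq \V_{N_2}$, we have the inclusion $S_{1,2}\subseteq \psi_{N_1}^{-1}(\V_{N_2})$, so it suffices to show that the right-hand side has Lebesgue measure zero in $\theta_{N_1}$.

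Next I would argue that $\psi_{N_1}^{-1}(\V_{N_2})$ is a \emph{proper} Zariski-closed subset of $\theta_{N_1}$. It is Zariski closed because $\psi_{N_1}$ is a polynomial (hence continuous in the Zariski topology) map and $\V_{N_2}$ is cut out by the vanishing of the invariants in $\I_{N_2}$; explicitly, this preimage is the common zero locus of the polynomials $f\circ \psi_{N_1}$ for $f\in \I_{N_2}$. To see properness, suppose toward a contradiction that $\psi_{N_1}^{-1}(\V_{N_2}) = \theta_{N_1}$. Then $\mathrm{Im}(\psi_{N_1}) = \mathcal{M}_{N_1} \subseteq \V_{N_2}$, and taking Zariski closures gives $\V_{N_1} \subseteq \V_{N_2}$, directly contradicting distinguishability.

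To finish, I would invoke the standard fact that the parameter space $\theta_{N_1} = \Delta^3 \times (\mathcal S_4)^{|\Sigma(N_1')|} \times (0,1)^r$ is a full-dimensional semi-algebraic subset of an irreducible real algebraic variety (its Zariski closure is a product of affine simplices/cubes, each irreducible). Any proper Zariski-closed subset of such a space has strictly smaller dimension than the ambient space, and a strictly lower-dimensional real algebraic subset of Euclidean space has Lebesgue measure zero. Applying this to $\psi_{N_1}^{-1}(\V_{N_2})$ (with $N_1,N_2$ ranging over all pairs of distinct networks in $\mathcal{N}$) yields the generic identifiability statement in Definition~\ref{defn: genericallyidentifiable}.

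The only real subtlety — and the main step I would want to check carefully — is the claim that properness of the Zariski closed set forces it to be measure zero. The cleanest way is to observe that $\theta_{N_1}$ is contained in an irreducible variety $\overline{\theta_{N_1}}$ of known dimension, and that any proper Zariski-closed subset of an irreducible variety has strictly smaller dimension, hence measure zero in the induced Lebesgue measure on $\theta_{N_1}$. Everything else is a straightforward chain of inclusions and the algebraic fact that the image and its Zariski closure are contained in the same variety $\V_{N_1}$; no computation with the polynomial parameterization is required beyond observing that it is polynomial.
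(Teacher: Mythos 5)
Your proposal is correct and is essentially the standard argument: the paper itself gives no proof of this proposition, deferring entirely to the cited Proposition 3.3 of \cite{Gross2017DistinguishingPN}, whose proof proceeds exactly as you do --- the offending parameter set lies in $\psi_{N_1}^{-1}(\V_{N_2})$, a Zariski-closed set that cannot be all of $\theta_{N_1}$ without forcing $\V_{N_1}\subseteq \V_{N_2}$, and a proper Zariski-closed subset of the (irreducible closure of the) full-dimensional semi-algebraic parameter space has Lebesgue measure zero. The only point worth making explicit is that identifying $\V_{N_1}$ with the Zariski closure of $\mathrm{Im}(\psi_{N_1})$ uses the fact that the real parameter space is Zariski-dense in the complex one, since $\I_{N_1}$ is defined via the complex extension $\psi'_{N_1}$; this is standard and does not affect the argument.
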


The condition that the network parameter be generically identifiable
then amounts to showing that for all
$N_1, N_2 \in \mathcal{N},$ the networks $N_1$ and $N_2$ are distinguishable, or equivalently, $\V_{N_1} \not\subseteq \V_{N_2}$ and $\V_{N_1} \not\supseteq \V_{N_2}$.
Proving that this condition is satisfied can then be done either by explicit computation of 
the ideals associated to $N_1$ and $N_2$ (as in \cite{Gross2017DistinguishingPN}), or by arguing
that certain phylogenetic invariants must exist (as in \cite{hollering2020Identifiability}).

\section{Distinguishability of 4-leaf semi-directed networks}
\label{sec: Varieties and distinguishability}

Our aim is to prove Theorem \ref{thm: main}, by showing that any two distinct $n$-leaf $r$-reticulation triangle-free level-$1$ semi-directed networks are distinguishable.
In order to show this, we will require a number of results concerning $4$-leaf networks which we prove in Lemma \ref{lem:buildingBlocks} below.

Up to leaf relabeling, there are six different 4-leaf level-$1$ semi-directed networks which are depicted
in Figure \ref{fig:UnrootedShapes}. 
In Lemma 1, we assume that $N_1$ and $N_2$ are two distinct 4-leaf semi-directed networks. We then consider all cases where $N_1$ and $N_2$ are each either a quartet tree ($Q$), a single triangle network ($\Delta$), a double-triangle network ($DT$), or a 4-cycle network ($4C$), and compare the resulting varieties.
We only need to consider four possibilities for each of $N_1$ and $N_2$, because under the JC, K2P, and K3P constraints, the variety of a triangle or double-triangle semi-directed network is determined by the unrooted skeleton of the network.
This can be shown by first observing that 
under the JC, K2P, and K3P models, the 
ideals of all of the 3-leaf semi-directed triangle networks are identical.
The proof then follows by applying the same
\emph{toric fiber product} argument that is described in the remark following
Proposition 4.5 in \cite{Gross2017DistinguishingPN}.

The results of Lemma \ref{lem:buildingBlocks} are summarized in Table \ref{tab:InclusionOverview} and the caption of that table contains the key to the symbols. To give a couple of examples, part (ii) of the lemma corresponds to the $(2,2)$ entry of the table. The $\sim$ symbol indicates that the networks are distinguishable, but only if $N_1$ and $N_2$ have distinct unrooted skeletons. The results of part (iii) of the lemma are represented by the entries $(4,1)$ and $(4,2)$ (when $k_1= 4$ and $N_1$ is a 4-cycle network) and by $(2,1)$ (when $k_1 = 3$ and $N_1$ is a 3-cycle, or triangle network). And of course, these results are also represented by the entries $(1,4)$, $(2,4)$, and $(2,1)$ when the roles of $N_1$ and $N_2$ are reversed.

\begin{figure}
    \centering
    \begin{subfigure}[b]{0.2\textwidth}
        \includegraphics[width=0.8\textwidth]{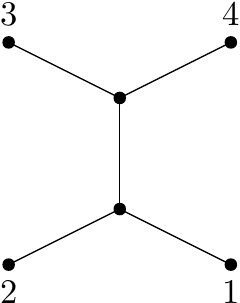}
        \caption{The quartet tree (Q)}
        \label{fig:Unrooted4Tre}
    \end{subfigure}
    ~ 
    \begin{subfigure}[b]{0.2\textwidth}
        \includegraphics[width=0.8\textwidth]{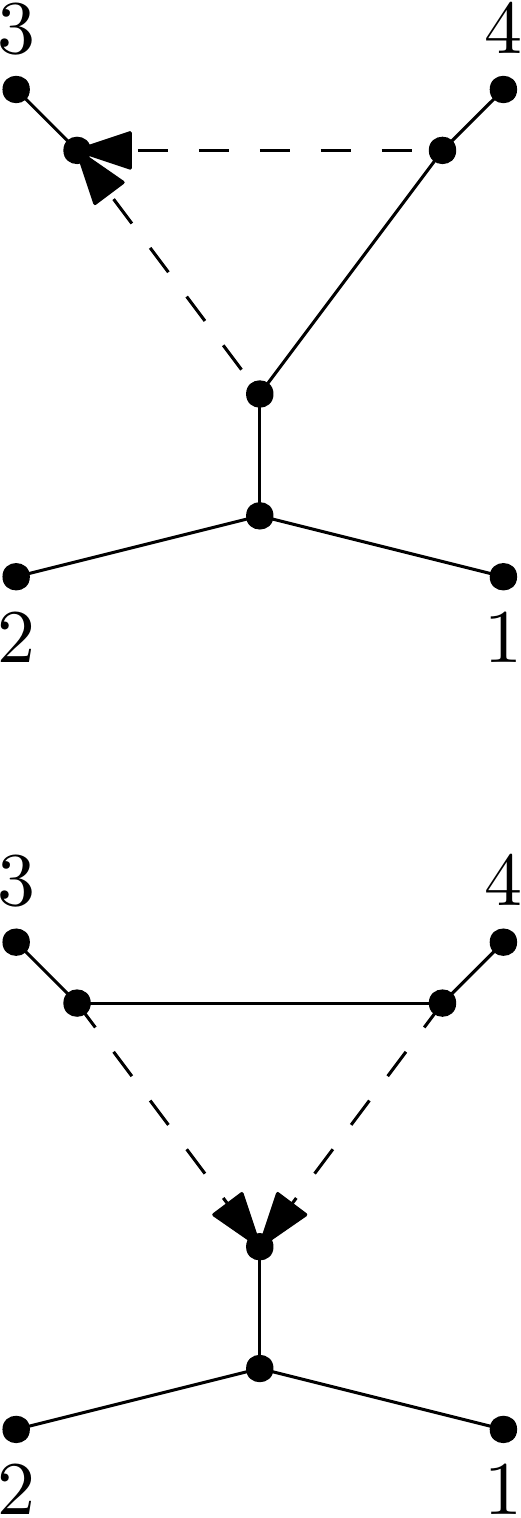}
        \caption{The single-triangle ($\Delta$)}
        \label{fig:UnrootedTriangleGraph}
    \end{subfigure}
    ~
    \begin{subfigure}[b]{0.2\textwidth}
        \includegraphics[width=0.8\textwidth]{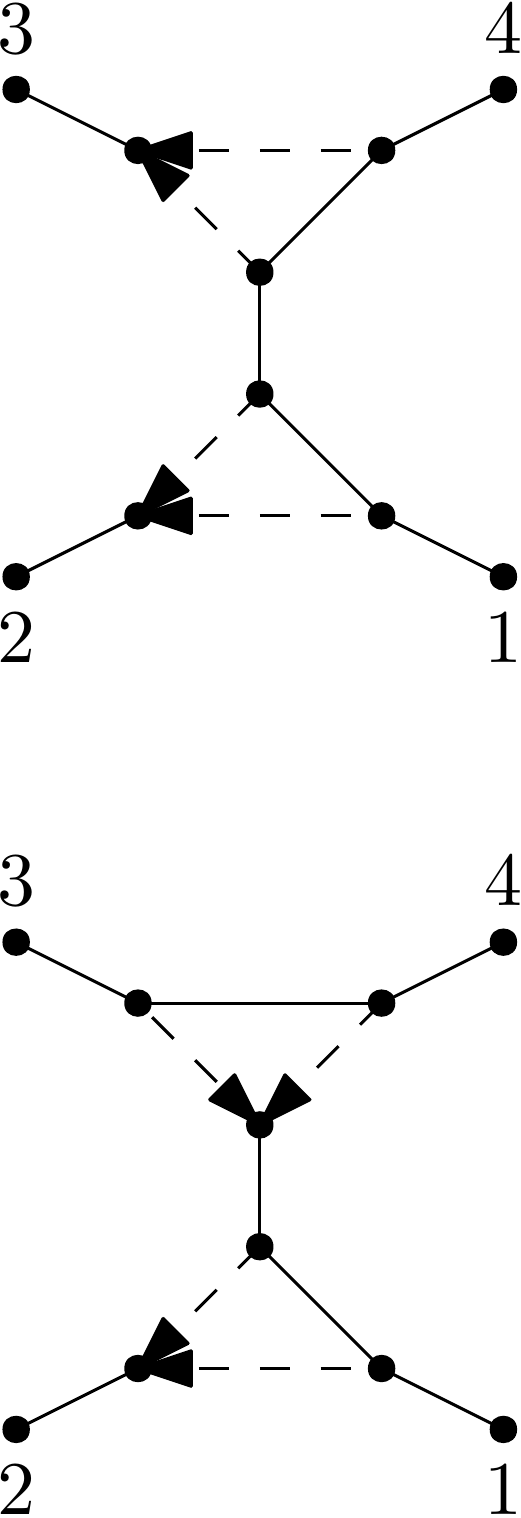}
        \caption{The double-triangle (DT)}
        \label{fig:UnrootedDoubleTriangle}
    \end{subfigure}
    ~
    \begin{subfigure}[b]{0.2\textwidth}
        \includegraphics[width=0.8\textwidth]{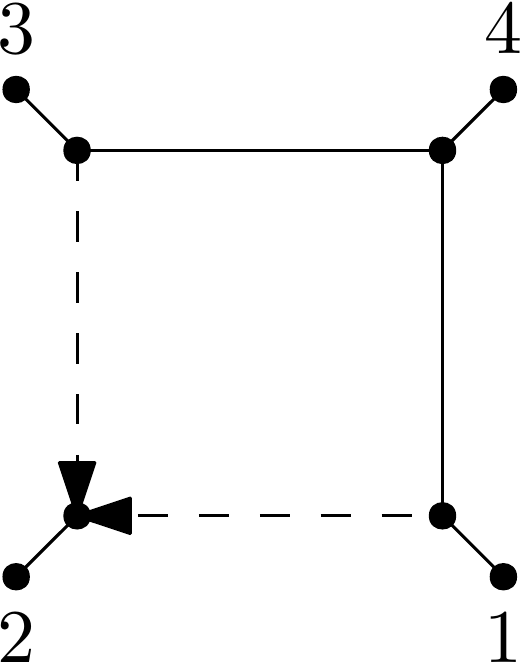}
        \caption{The $4$-cycle ($4C$)}
        \label{fig:Unrooted4Cycle}
    \end{subfigure}
    \caption{
    All possible semi-directed level-1 networks on four leaves (up to relabeling of leaves), grouped by their unrooted skeletons.
    }\label{fig:UnrootedShapes}
\end{figure}

\begin{lemma}\label{lem:buildingBlocks}
Let $N_1$ and $N_2$ be distinct $4$-leaf level-1 semi-directed networks.
Then under the JC, K2P, or K3P constraints:
\begin{enumerate}[(i)]
\item If $N_1$ and $N_2$ are both trees, 
then $N_1$ and $N_2$ are distinguishable;

\item If $N_1$ and $N_2$ are both single-triangle networks
and have different (leaf-labelled) unrooted skeletons, 
then $N_1$ and $N_2$ are distinguishable;

\item If $N_1$ is a $k_1$-cycle network with 
$k_1 \leq 4$ and $\N_2$ is a tree or a $k_2$-cycle network with $k_2 < k_1$,
then $\V_{N_1} \not\subseteq \V_{N_2}$;

\item If $N_1$ and $N_2$ are both $4$-cycle networks, then $N_1$ and $N_2$ are distinguishable;

\item If $N_1$ is a double-triangle network and $N_2$ a single-triangle network or a tree, then $\V_{N_1} \not\subseteq \V_{N_2}$;
\item If $N_1$ is a double-triangle network and $N_2$ is a $4$-cycle network, 
then $N_1$ and $N_2$ are distinguishable;

\item If $N_1$ and $N_2$ are both double-triangle networks 
and have different (leaf-labelled) unrooted skeletons, 
then $N_1$ and $N_2$ are distinguishable. 

\end{enumerate}
See Table~\ref{tab:InclusionOverview} for an overview.
\end{lemma}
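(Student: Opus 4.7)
The strategy is to dispatch the seven sub-statements by combining the identifiability results already cited with a small collection of direct algebraic computations, carried out in the Fourier $q$-coordinates so that each network parametrization becomes a weighted sum of monomials. This keeps the ideals tractable and reduces distinguishability to finding an explicit polynomial that vanishes on one variety and not the other.

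Parts (i) and (iv) are essentially immediate. Distinguishability of any two distinct quartet trees under JC, K2P, or K3P is classical for group-based phylogenetic models (see Chapter 15 of \cite{sullivant2018algebraic}), while distinguishability of any two distinct $4$-cycle networks is the $n = 4$ instance of Theorem~\ref{thm: oneRetic}. The sub-case of (iii) with $k_1 = 4$ likewise reduces to Theorem~\ref{thm: oneRetic}, since that theorem gives $\mathcal{V}_{N_1} \not\subseteq \mathcal{V}_{N_2}$ whenever $N_1$ is a $4$-cycle network and $N_2$ is a quartet tree or a distinct $4$-cycle network.

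The triangle-based cases (ii), (iii) with $k_1 = 3$, and (v) I would handle by first invoking the toric fiber product argument from the remark after Proposition~4.5 of \cite{Gross2017DistinguishingPN}: under JC, K2P, and K3P, the variety of a $4$-leaf single- or double-triangle network depends only on its unrooted skeleton, because the three $3$-leaf triangle ideals coincide. This lets me replace each triangle by a ``collapsed'' multifurcating vertex and work with a much smaller parametrization. For (ii), the two single-triangle skeletons induce different quartet splits, so a standard Fourier edge-invariant on the distinguishing split vanishes on exactly one of $\mathcal{V}_{N_1}, \mathcal{V}_{N_2}$. For (iii) with $k_1 = 3$, the extra reticulation parameter forces $\mathcal{V}_{N_1}$ to contain points that fail every split-invariant of a shorter cycle or tree, yielding the required non-containment. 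Part~(v) is similar: the collapsed double-triangle skeleton differs from every single-triangle or tree skeleton on at least one quartet split.

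Parts (vi) and (vii) I expect to be the main obstacle, since the double-triangle and $4$-cycle varieties have comparable dimension, and two distinct double-triangle skeletons share the same set of leaf-quartet splits, so no single split-invariant will suffice. My plan is to compute $\mathcal{I}_{N_1}$ and $\mathcal{I}_{N_2}$ directly in $q$-coordinates for each of the finitely many $4$-leaf cases (for instance with \emph{Macaulay2}) and exhibit explicit polynomials in $\mathcal{I}_{N_1} \setminus \mathcal{I}_{N_2}$ and in $\mathcal{I}_{N_2} \setminus \mathcal{I}_{N_1}$. The cleanest way to obtain a statement uniform over the three constraint systems is to produce the distinguishing invariants in the most general (K3P) Fourier ring and verify that they remain non-vanishing on the smaller varieties after specializing parameters to K2P and JC. Once each of (i)--(vii) is checked in this way, the full inclusion pattern recorded in Table~\ref{tab:InclusionOverview} follows by inspection.
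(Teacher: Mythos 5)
Your high-level architecture (cite prior results for the tree and $4$-cycle cases, collapse triangles via the toric fiber product, and settle the rest with explicit invariants in $q$-coordinates) matches the paper's, but two of your intermediate steps do not go through as described. The more serious problem is your treatment of parts (ii), (iii) with $k_1=3$, and especially (v) by ``split-invariants on the distinguishing split.'' For (v) the claim that the collapsed double-triangle skeleton differs from every single-triangle or tree skeleton on a quartet split is false: a double-triangle whose central cut-edge induces the split $ab|cd$ has exactly the same (unique, non-trivial) split as the quartet tree $ab|cd$ and as the single-triangle with cherry $\{a,b\}$, and all four trees displayed by the double-triangle carry that same split. So no edge/split invariant can separate these varieties, and this equal-split situation is precisely the hard case the lemma must handle. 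The same issue afflicts (iii) with $k_1=3$ when the tree shares the triangle's cherry. The paper resolves these cases only by exhibiting explicit computed invariants ($g_1$, and additionally $g_2$ under JC) and verifying vanishing/non-vanishing computationally; a purely split-combinatorial argument cannot substitute for this. A secondary issue: your reduction of the $k_1=4$ sub-case of (iii) to Theorem~\ref{thm: oneRetic} is a misattribution, since that theorem concerns the class of networks with exactly one cycle of length at least four and therefore says nothing about comparisons with trees or $3$-cycle networks; the correct citations are Corollary 4.8 of \cite{Gross2017DistinguishingPN} (JC) and Lemmas 28--29 of \cite{hollering2020Identifiability} (K2P/K3P).

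On parts (vi) and (vii), your instinct that explicit computation is needed is right, but the plan to compute the full ideals $\mathcal I_{N_1}$ and $\mathcal I_{N_2}$ (e.g., via Gr\"obner bases in Macaulay2) is not viable under the K2P and K3P constraints: the authors state explicitly that they were unable to compute full generating sets for those ideals. The paper's workaround is to use the random-search strategy of \cite{hollering2020Identifiability} to locate a small subset of $q$-coordinates likely to contain a distinguishing invariant, then compute only in that small subring, producing the six polynomials $g_1,\dots,g_6$; it also exploits leaf-relabeling symmetry so that a single invariant vanishing on exactly one variety of a given combinatorial type suffices for the whole case. Your idea of working in the K3P Fourier ring and re-verifying non-vanishing after specializing to K2P and JC is sound in principle (and is consistent with the paper's finding that different invariant sets are needed for JC versus K2P/K3P), but without the subring reduction the computation you propose would not terminate in practice.
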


\begin{table}
 \caption{An overview of  Lemma~\ref{lem:buildingBlocks} results for two distinct 4-leaf level-1 semi-directed networks~$N_1$ and~$N_2$. The two networks are represented by the row for~$N_1$ and the column for~$N_2$, and each element in the~$4\times4$ grid indicates whether the two networks are distinguishable ($\surd$), the variety of one network is not contained in that of the other ($\not\subseteq$ means $\V_{N_1} \not\subseteq \V_{N_2}$, and $\not\supseteq$ means $\V_{N_1} \not\supseteq \V_{N_2}$), or the two networks are distinguishable if the unrooted skeletons are different ($\sim$).}
 \label{tab:InclusionOverview}
 \begin{center}
 \begin{tabular}{| lc|c|c|c|c| } 
  \cline{3-6}
  \multicolumn{2}{c|}{} & \multicolumn{4}{ c| }{$N_2$}\\
  \multicolumn{2}{c|}{} & \multicolumn{1}{c}{Q} & \multicolumn{1}{c}{$\Delta$} & \multicolumn{1}{c}{$DT$} & $4C$ \\
  \hline 
  &Q  & $\surd$ & $\not\supseteq$  & $\not\supseteq$ & $\not\supseteq$ \\
  \cline{3-6}
  $N_1$ &$\Delta$  & $\not\subseteq$  & $\sim$& $\not\supseteq$ & $\not\supseteq$ \\
  \cline{3-6}
  & DT  & $\not\subseteq$  & $\not\subseteq$ & $\sim$ &  $\surd$ \\
  \cline{3-6}
  &4C  & $\not\subseteq$  & $\not\subseteq$ &  $\surd$ &  $\surd$ \\
  \hline
 \end{tabular}
 \end{center}
\end{table}

The proof of Lemma \ref{lem:buildingBlocks} will be given below. 
We first outline the proof strategy. Some parts of the lemma
will follow immediately from results in \cite{Gross2017DistinguishingPN} and \cite{hollering2020Identifiability}. 
In \cite{Gross2017DistinguishingPN}, the proofs were obtained by computing Gr\"{o}bner bases for
all of the ideals involved and then comparing
the ideals. However, this was only
possible because the constraints considered were the 
Jukes-Cantor constraints, the most restrictive that we consider.
In \cite{hollering2020Identifiability}, the
authors extend the results to the 
K2P and K3P constraints using a method based on the theory
of algebraic matroids. This method is preferable
when there are fewer constraints since the Gr\"{o}bner bases computations 
are difficult if not impossible to carry out.
Here, we find the required
invariants by modifying this method slightly.
Specifically, we apply the random search strategy
described in that paper to locate small subsets 
of variables that are likely to contain distinguishing
invariants. We then perform our computations in 
a much smaller subring of the original variables.
This greatly reduces the size of the required computations and allows us to generate specific invariants without computing Gr\"{o}bner bases for the ideals.

In order to reduce the total number of invariants 
required to prove each part, we take advantage
of the symmetry between networks.
As an example, suppose that the statement in part (vii) is
false. Then there must exist two double-triangle 
networks with distinct skeletons,
$N_1$ and $N_2$, 
that are not distinguishable. 
All of the network varieties are
parameterized, and hence irreducible as algebraic varieties, 
which means we may assume that if 
two networks are not distinguishable 
then one is contained in the other. Thus,
without loss of generality, we may assume that 
$\V_{N_1} \subseteq \V_{N_2}$, which implies
the reverse inclusion of ideals, $\I_{N_2} \subseteq \I_{N_1}$. 
Up to relabeling, every double-triangle network 
has the same unrooted skeleton. Thus, 
we can obtain any arbitrary double-triangle network $\hat N_2$ from  $N_2$ by permuting leaf labels. If we apply the
same permutation to the leaf labels of $N_1$, we
obtain another double-triangle network $\hat N_1$
for which $\I_{\hat N_2} \subseteq \I_{\hat N_1}$.
Since our choice of $\hat N_2$ is arbitrary, if we can
show that there is a single
double-triangle network with ideal not contained 
in the ideal of any other double-triangle
network, then we arrive at a contradiction, and have
thus proven part (vii). Therefore, in order to prove part (vii), it will suffice to produce a single invariant that vanishes on exactly one of the double-triangle network varieties. 
A similar argument applies in each of the other parts. 

In order to prove some parts of the lemma, we require two or more invariants to distinguish all of the relevant networks,
though all parts can be proven using some combination of just the following six polynomial invariants:
\begin{align*}
g_1 = &q_{ATTA}q_{CCGG}q_{GATC} - q_{AAGG}q_{CTTC}q_{GCTA},\\
g_2 = &q_{CTTC} - q_{GCGC}, \\
g_3 = &q_{CAGT}q_{GTCA}q_{TGAC} - q_{CACA}q_{GTGT}q_{TGAC} - q_{CAGT}q_{GTAC}q_{TGCA} + \\               
&q_{CAAC}q_{GTGT}q_{TGCA} + q_{CACA}q_{GTAC}q_{TGGT} - q_{CAAC}q_{GTCA}q_{TGGT}, \\
g_4 = &q_{AACC}q_{CGCG}q_{GAGA}q_{TAAT} - q_{AACC}q_{CGAT}q_{GAGA}q_{TACG} + \\
      &q_{AACC}q_{CAGT}q_{GGAA}q_{TACG} - q_{AAAA}q_{CAGT}q_{GGCC}q_{TACG}, \\
g_5 = &q_{AAAA}q_{GACT}q_{GCGC} - q_{AAGG}q_{TAAT}q_{TGCA}, \\
g_6 = &q_{AAGG}q_{GATC}q_{TAAT} - q_{AATT}q_{GAAG}q_{TAGC}. \\
\end{align*}
In the supplementary Macaulay2 \citep{M2} files,
available at \\ 
\begin{center} 
    \href{http://github.com/colbyelong/DistinguishingLevel1PhylogeneticNetworks}{github.com/colbyelong/DistinguishingLevel1PhylogeneticNetworks},
\end{center}
we provide the code to verify
that these polynomials vanish or do not
vanish on the referenced varieties as claimed.

\begin{proof}[Proof of Lemma \ref{lem:buildingBlocks}]
Statement (i) is a well-known result for the JC, K2P, and K3P constraints and can be verified using the \emph{Small trees catalog} \citep{smalltreescatalog}.
For the Jukes-Cantor constraints, (ii)-(iv) follow from Proposition 4.6,
Corollary 4.8, and Corollary 4.9
in \cite{Gross2017DistinguishingPN}. 

To prove (ii) for the K2P and K3P constraints
we require a set of invariants that vanishes on 
exactly one of the single-triangle networks.
The set $\{g_1\}$ is confirmed to be such a set 
for both constraints in the supplementary files.  
Statements (iii) and (iv) are proven for the
K2P and K3P models by Lemmas 28 and 29 in
\cite{hollering2020Identifiability}.

To prove (v), we require a set of invariants that
vanishes on one of the tree varieties, but on
none of the double-triangle network varieties,
and a set of invariants that vanishes on 
one of the single-triangle networks varieties,
but on none of the double-triangle network varieties.
The set $\{g_1\}$ is shown to be the required set
for both parts under K2P and K3P, and the set
$\{g_1, g_2\}$ works for the JC constraints.

For (vi), we must first show that there is a set of invariants
that vanishes on one of the double-triangle
network varieties but on none of the 4-cycle network
varieties. The set $\{g_3\}$ works for all constraints
and thus establishes that if $N_1$
is a double-triangle and $N_2$ is a 4-cycle
network, then $V_{N_2} \not \subseteq \V_{N_1}$.
We prove that $V_{N_1} \not \subseteq V_{N_2}$,
and hence that the networks are distinguishable,
by constructing a set of invariants that vanishes
on one of the 4-cycle network varieties but
on none of the double-triangle network varieties.
For the JC constraints, this set is $\{g_4, g_5\}$.
For the K2P and K3P constraints, this set is
$\{g_4, g_6\}$.

The invariant $g_3$ also establishes (vii), 
since it vanishes on exactly one of the double-triangle networks under JC, K2P, and
K3P.
\qed
\end{proof}

We also need a result on $4$-leaf networks that does not fit into Table~\ref{tab:InclusionOverview}.  To state this result we first need some definitions concerning the type of splits
in a network.


\begin{definition}
For networks $\N_1$ and $\N_2$, we say  $X-Y$ is a \emph{common split} if $X-Y$ is a split in both $\N_1$ and $\N_2$; it is \emph{non-trivial} if $|X|, |Y| \geq 2$.
 Two splits $X-Y$ in $\N_1$ and $A-B$ in $\N_2$ are \emph{conflicting} if $X \cap A, X \cap B,  Y \cap A,  Y\cap B$ are all non-empty.
\end{definition}

\begin{lemma}\label{lem:conflictingSplits}
 Let $\N_1$ and $\N_2$ be distinct $4$-leaf level-1 semi-directed networks. If $\N_1, \N_2$ have conflicting splits, then $\N_1$ and $\N_2$ are distinguishable under the JC, K2P, or K3P constraints.
\end{lemma}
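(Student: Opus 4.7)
The plan is to establish distinguishability via a uniform flattening-rank argument driven by the cut-edges that give rise to the conflicting splits. Let $S_1 = X|Y$ be a split of $\N_1$ and $S_2 = A|B$ a split of $\N_2$ which conflict. Since $X\cap A, X\cap B, Y\cap A, Y\cap B$ are all non-empty and $|\leafSet|=4$, we must have $|X|=|Y|=|A|=|B|=2$ and $S_1\neq S_2$; in particular each $S_i$ is a non-trivial $2$-$2$ split arising from a cut-edge of $\N_i$. A quick inspection of Figure~\ref{fig:UnrootedShapes} shows that the only 4-leaf level-1 semi-directed skeletons admitting a non-trivial split are $Q$ (via its central edge), $\Delta$ (via the edge joining the cherry to the triangle), and $DT$ (via the bridge between the two triangles), and that each admits a unique such split. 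Hence $\N_1,\N_2\in\{Q,\Delta,DT\}$ and $S_i$ is the unique non-trivial split of $\N_i$.

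For the invariants I would use the standard cut-edge factorization: if $e$ is a cut-edge of a phylogenetic network $\N$ inducing split $U|V$, every $p\in\V_\N$ decomposes as
\[
p_{\omega_U\omega_V} \;=\; \sum_{s\in[k]} \pi(s)\, p_U(\omega_U\mid s)\, p_V(\omega_V\mid s),
\]
with $k=4$ the state size. Hence the $4^2\times 4^2$ flattening matrix $F_{U|V}(p)$ has rank at most $k$, so the $(k+1)\times(k+1)$ minors of $F_{U|V}$ lie in $\I_\N$. Applied to each $(\N_i,S_i)$, this yields minors of $F_{S_i}$ sitting in $\I_{\N_i}$.

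To finish, I would show that for each $j\neq i$ the flattening $F_{S_i}$ has generic rank strictly greater than $k$ on $\V_{\N_j}$. For $\N_j = Q$ this is the classical four-point flattening statement for group-based tree models applied along the ``wrong'' split. For $\N_j\in\{\Delta, DT\}$, I would specialize each reticulation parameter to $0$ or $1$ in the Fourier parameterization: this kills one summand and removes the Fourier parameter on the deleted reticulation edge from every monomial, so the network parameterization restricts to that of a constituent tree $T$, giving $\V_T\subseteq\V_{\N_j}$. A direct check on $\Delta$ and on the 4-leaf $DT$ shows that every such constituent tree inherits the same non-trivial split $S_{\N_j}$ as the network, and since $S_i\neq S_{\N_j}$ the tree case produces a point of $\V_T\subseteq\V_{\N_j}$ at which $F_{S_i}$ has rank exceeding $k$. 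Because rank $>k$ is a Zariski-open condition and $\V_{\N_j}$ is irreducible, this holds generically, so the $(k+1)\times(k+1)$ minors of $F_{S_i}$ lie in $\I_{\N_i}\setminus\I_{\N_j}$ and witness $\V_{\N_j}\not\subseteq\V_{\N_i}$. Swapping the roles of $\N_1$ and $\N_2$ yields the reverse non-containment, so $\N_1$ and $\N_2$ are distinguishable under each of the JC, K2P, and K3P constraints.

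The main obstacle is the generic-rank claim on $\V_\N$ for $\N\in\{\Delta, DT\}$: one has to verify that specializing a reticulation parameter really does restrict the Fourier parameterization to a constituent tree variety with the expected split, and that the tree-level rank bound survives in the mixture parameterization. The first point is transparent in Fourier coordinates because the extreme values $\delta\in\{0,1\}$ kill one of the monomial summands exactly; the second is a semicontinuity observation, since the locus $\{\mathrm{rank}(F_{S_i}) > k\}\cap\V_{\N_j}$ is open and non-empty, hence dense in the irreducible variety $\V_{\N_j}$.
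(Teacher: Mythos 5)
Your proof is correct, but it takes a genuinely different route from the paper's. The paper disposes of most case pairs by citing Table~\ref{tab:InclusionOverview} (i.e., Lemma~\ref{lem:buildingBlocks}) and then handles the two remaining one-sided containments with the explicit, computationally verified invariants $g_1,g_2,g_3$. You instead argue uniformly with edge (flattening) invariants: the conflicting-splits hypothesis forces both networks to lie in $\{Q,\Delta,DT\}$ with distinct (hence automatically conflicting) unique non-trivial $2$-$2$ splits; the cut-edge factorization puts the $5\times 5$ minors of the flattening along $S_i$ into $\I_{\N_i}$; and a constituent-tree specialization plus semicontinuity shows some such minor survives on $\V_{\N_j}$. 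The two points that need (and admit) verification are exactly the ones you flag: (a) every constituent quartet tree of the 4-leaf $\Delta$ and $DT$ displays the same $2$-$2$ split as the network --- true, because suppressing the degree-two vertices of any spanning path of a triangle reattaches the two singleton leaves at a common vertex, so every constituent tree is the quartet separating the cherry (or the two leaves of one triangle) from the rest; and (b) the wrong-split flattening of a quartet tree has rank exceeding $4$ generically even under the most restrictive JC constraints --- true, since after the Fourier change of coordinates (which preserves flattening ranks) and a diagonal rescaling by the pendant-edge parameters, the wrong-split flattening is block-diagonal with four copies of the group matrix $(a_{g+h})_{g,h}$ of the internal edge, which is nonsingular for generic JC parameters (eigenvalues $1+3b$ and $1-b$), giving generic rank $16$. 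What your approach buys is uniformity: one argument covering all shape pairs, all three constraints, and both directions of non-containment, with no Macaulay2 computation; what the paper's approach buys is brevity, outsourcing most subcases to Lemma~\ref{lem:buildingBlocks} and the rest to certified computation, and producing lower-degree witnesses (degree at most $4$ versus your degree-$5$ minors).
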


%

\begin{proof}
 Note that 4-cycles have no non-trivial splits, so we just need to compare trees, single-triangle networks, and double-triangle networks. Moreover, Table
 \ref{tab:InclusionOverview} shows that we only need to verify that 
 $\mathcal{V}_{N_1} \not \subseteq \mathcal{V}_{N_2}$ 
 in the following cases: 
 \begin{enumerate}[(i)]
     \item when $N_1$ is a tree or triangle network and $N_2$ is a double-triangle network with a conflicting split and
     \item when $N_1$ is a tree and $N_2$ is a triangle network with a conflicting split.
 \end{enumerate}
 The invariant  $g_3$ can be used to verify case (i) for all three constraints.
 The invariant $g_2$ can be used to verify case (ii) for JC, and $g_1$ can be used to verify case (ii) for K2P and K3P.
 \qed
\end{proof}

Finally we require 
Lemma~\ref{lem:subnetVarieties},
 which allows us to use the above small networks as building blocks to prove the claim about larger networks. 
 To state Lemma~\ref{lem:subnetVarieties}, we first define the \emph{restriction} of a network to a subset of leaves.

\begin{definition}
\label{defn: Restriction}
Let $N$ be an~$n$-leaf semi-directed network on~$\leafSet$
and let $A \subseteq \leafSet$. The \emph{restriction of $N$
to $A$} is the semi-directed network $N|_{A}$
obtained by taking the union of all directed paths between leaves
in $A$
(where undirected edges are treated as bidirected) and
then suppressing all degree two vertices and removing parallel edges.
\end{definition}



Lemma \ref{lem:subnetVarieties} is essentially a one-way version of Proposition 4.3 from \cite{Gross2017DistinguishingPN}, and we use a piece of the proof of that proposition below.
 
 \begin{lemma}\label{lem:subnetVarieties}
  Let $\N_1$ and $\N_2$ be distinct~$n$-leaf semi-directed networks on~$\leafSet$ and let $A \subseteq \leafSet$. If $\V_{\N_1|_A} \not\subseteq \V_{\N_2|_A}$, then $\V_{\N_1} \not\subseteq \V_{\N_2}$.
 \end{lemma}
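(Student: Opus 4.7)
The plan is to construct a distinguishing polynomial invariant for $(N_1, N_2)$ by lifting a distinguishing invariant from the restricted pair $(N_1|_A, N_2|_A)$. The hypothesis $\V_{N_1|_A} \not\subseteq \V_{N_2|_A}$ gives some $f \in \I_{N_2|_A} \setminus \I_{N_1|_A}$, and the goal is to produce a polynomial $F \in \I_{N_2} \setminus \I_{N_1}$, which establishes $\V_{N_1} \not\subseteq \V_{N_2}$.

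The central tool is the marginalization map, expressed in the $q$-coordinates of Section~\ref{sec: Markov Models as Algebraic Varieties}. Marginalizing over the leaves in $\leafSet \setminus A$ turns out to be the linear projection $\pi \colon \mathbb{C}^{4^n} \to \mathbb{C}^{4^{|A|}}$ given by $\pi(p)_\omega = p_{\tilde\omega}$, where $\tilde\omega \in (\mathbb{Z}_2 \times \mathbb{Z}_2)^n$ agrees with $\omega$ on the coordinates indexed by $A$ and takes the group identity $(0,0)$ on the coordinates in $\leafSet \setminus A$. (For a group-based model, a direct character-orthogonality computation shows that summing probabilities over the marginalized leaves corresponds, after the Fourier-Hadamard transform, to substituting the identity element at those coordinates.) Dually, this induces a ring homomorphism $\sigma$ sending $q^A_\omega \mapsto q_{\tilde\omega}$, and I define $F := \sigma(f)$.

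The proof then reduces to two claims: (i) $\pi(\V_{N_2}) \subseteq \V_{N_2|_A}$, and (ii) $\pi(\V_{N_1})$ is Zariski-dense in $\V_{N_1|_A}$. Claim (i) follows from the parameterization in equation~\eqref{eq:probs}: each edge of $N_2|_A$ corresponds to a path in $N_2$, and the Fourier parameters multiply along such paths, so the marginal of a distribution in $\mathcal{M}_{N_2}$ is itself a distribution in $\mathcal{M}_{N_2|_A}$, whence the inclusion of varieties. For claim (ii), I would explicitly realize each parameter choice for $\psi_{N_1|_A}$ as the image under $\pi$ of a parameter choice for $\psi_{N_1}$: distribute each edge's transition matrix across the corresponding path in $N_1$, preserve reticulation parameters for cycles of $N_1$ that survive restriction, and for any reticulation of $N_1$ whose cycle collapses in $N_1|_A$, take the limit $\delta_i \to 0$ or $\delta_i \to 1$ so that the reticulation degenerates to a tree edge whose parameters can be absorbed into an adjacent edge.

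Granting (i) and (ii), the rest is short. If $F$ vanished identically on $\V_{N_1}$, then $f \circ \pi$ would vanish on $\V_{N_1}$, so $f$ would vanish on $\pi(\V_{N_1})$; by (ii) and the Zariski-closedness of the vanishing locus of $f$, this would force $f$ to vanish on all of $\V_{N_1|_A}$, contradicting $f \notin \I_{N_1|_A}$. Hence $F \notin \I_{N_1}$, while (i) immediately gives $F(p) = f(\pi(p)) = 0$ for every $p \in \V_{N_2}$, so $F \in \I_{N_2}$. The main obstacle is the density claim (ii), and specifically the case of reticulation vertices of $N_1$ that do not appear in $N_1|_A$: one must check that the limiting parameterizations (with some $\delta_i \in \{0,1\}$) still sweep out a Zariski-dense subset of $\V_{N_1|_A}$. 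For level-$1$ networks this is tractable because each cycle can be degenerated independently without interacting with the others, but it is the step where the combinatorial content of the lemma lives.
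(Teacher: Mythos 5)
Your proposal is correct and follows essentially the same route as the paper: the paper's proof simply defers to Proposition~4.3 of \cite{Gross2017DistinguishingPN}, whose argument is precisely your marginalization-in-Fourier-coordinates construction that lifts an invariant $f \in \I_{N_2|_A}\setminus\I_{N_1|_A}$ to one in $\I_{N_2}\setminus\I_{N_1}$. You have in effect written out the details that the paper leaves to the cited reference, including the correct handling of the only delicate point (surjectivity of the marginalization onto $\mathcal{M}_{N_1|_A}$ when cycles collapse under restriction).
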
%

\begin{proof}
Let $\V_{\N_1|_A} \not\subseteq \V_{\N_2|_A}$.  
 Then $\V_{\N_{1|A}} \cap \V_{\N_{2|A}} \subsetneq \V_{\N_{1|A}}$. In the proof of
Proposition 4.3 from \cite{Gross2017DistinguishingPN}, 
it is shown that if $\V_{\N_{1|A}} \cap \V_{\N_{2|A}} \subsetneq \V_{\N_{1|A}}$, then there exists a polynomial 
invariant $f$ contained in 
$\mathcal{I}_{N_2} \setminus \mathcal{I}_{N_1}$, which implies
that $\mathcal{I}_{N_2} \not \subseteq \mathcal{I}_{N_1}$, 
and so $\V_{\N_1} \not \subseteq \V_{\N_2}$.

\end{proof}


Lemma~\ref{lem:subnetVarieties} implies that
in order to prove Theorem \ref{thm: main} it will suffice to show that for any distinct triangle-free level-$1$ semi-directed networks $\N_1$ and $\N_2$, there either exists a set $A \subseteq \leafSet$ with $|A| = 4$ such that $N_1|_A$ and  $N_2|_A$ are distinguishable, or sets $A,B \subseteq \leafSet$ with $|A| = |B| = 4$ such that  $\V_{\N_1|_A} \not\subseteq \V_{\N_2|_A}$ and  $\V_{\N_1|_B} \not\supseteq \V_{\N_2|_B}$.


\section{Combinatorial properties of triangle-free level-1 semi-directed networks}\label{sec:combinatorial}


If $X \cup Y$ is a partition of $\leafSet$ such that $\N$ contains an $X-Y$ split, then denote by $N /X$ the network $\N|_{\{x\} \cup Y}$, for an arbitrary $x \in X$. Observe that the unrooted skeleton
of $\N/X$ does not depend on the choice of $x$.
Observe also that $r(\N) = r(\N/X) + r(\N/Y)$.

\begin{observation}\label{obs:splitSideDiffers}
 If $\N_1$ and $\N_2$ are distinct $n$-leaf semi-directed networks and $X-Y$ is a common split, then either $\N_1/X$ and $\N_2/X$ are distinct or $\N_1/Y$ and $\N_2/Y$ are distinct.
\end{observation}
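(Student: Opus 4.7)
The plan is to prove the contrapositive: if $\N_1/X = \N_2/X$ and $\N_1/Y = \N_2/Y$, then $\N_1 = \N_2$, contradicting distinctness. For each $i \in \{1,2\}$, fix a cut edge $e_i$ of $\N_i$ that realises the split $X-Y$. Removing $e_i$ decomposes $\N_i$ into two connected pieces, $\N_i^X$ containing the leaves of $X$ and $\N_i^Y$ containing those of $Y$. The whole network $\N_i$ is recovered from the ordered pair $(\N_i^X, \N_i^Y)$ by re-adding $e_i$ between the two distinguished attachment vertices, preserving its orientation if $e_i$ is a reticulation edge.

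Next, I would argue that the restriction $\N_i/X = \N_i|_{\{x\}\cup Y}$ contains all of $\N_i^Y$ together with a stub edge to the leaf $x$ that records the endpoint and orientation of $e_i$ on the $Y$-side; the $X$-side of $\N_i$ collapses entirely to this single stub under the degree-two suppression in Definition~\ref{defn: Restriction}. Symmetrically, $\N_i/Y$ contains $\N_i^X$ plus a matching stub at $y$. From the pair $(\N_i/X, \N_i/Y)$ one can then reconstruct $\N_i$ by identifying the two stub edges and suppressing the resulting degree-two vertex. Since this reconstruction is deterministic, the hypotheses $\N_1/X = \N_2/X$ and $\N_1/Y = \N_2/Y$ force $\N_1 = \N_2$.

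The main obstacle is verifying that the stub edge in each restriction correctly records the direction (or lack thereof) of the cut edge $e_i$, so that the gluing step is unambiguous. This reduces to a case analysis based on whether $e_i$ is undirected or a reticulation edge, and in the latter case on which side of the split contains the reticulation vertex. In every case a direct unpacking of Definition~\ref{defn: Restriction}---noting that only reticulation edges carry directions in a semi-directed network and that the orientation of a cut edge is visible from either side---confirms the required correspondence between $\N_i$, $\N_i/X$, and $\N_i/Y$, closing the argument.
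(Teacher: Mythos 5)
The paper states this result as an Observation and gives no proof, so there is nothing to compare against line by line; your cut-and-glue argument (recover $\N_i$ by splicing $\N_i/X$ and $\N_i/Y$ along their placeholder stubs at the cut edge) is the evident justification the authors leave implicit, and it is correct in the level-1 setting in which the observation is used. One small simplification: the two reticulation edges entering a reticulation vertex always lie on a common undirected cycle, so a cut edge realising a split can never be a reticulation edge; hence $e_i$ is always undirected and the orientation case analysis in your final paragraph is vacuous.
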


The next lemma follows immediately from Lemma~\ref{lem:subnetVarieties} and the definition of $\N/X$.

\begin{lemma}\label{lem:splitSideInheritance}
 Let~$N_1$ and~$N_2$ be distinct~$n$-leaf semi-directed networks on~$\leafSet$.
 Suppose $X \cup Y$ is a partition of $\leafSet$ such that $\N_1$ and $\N_2$ both contain an $X-Y$ split. 
 If $\V_{\N_1/X} \not \subseteq \V_{\N_2/X}$ then $\V_{\N_1} \not\subseteq \V_{\N_2}$.
\end{lemma}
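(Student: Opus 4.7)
The plan is to obtain this as a one-line consequence of Lemma~\ref{lem:subnetVarieties} after unwinding the definition of $N/X$. Recall that, by definition, $N/X := N|_{\{x\} \cup Y}$ for any chosen $x \in X$, so the natural move is to pick an arbitrary representative $x \in X$ and set $A := \{x\} \cup Y \subseteq \leafSet$. With this choice of $A$, the restrictions $N_1|_A$ and $N_2|_A$ appearing in Lemma~\ref{lem:subnetVarieties} are exactly $N_1/X$ and $N_2/X$.

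The only mildly non-trivial thing to check is that the identification $N_i|_A = N_i/X$ really holds as semi-directed networks (not merely at the level of unrooted skeletons), and in particular that the reticulation structure is preserved. This is where the hypothesis that $X-Y$ is a split in both $N_1$ and $N_2$ enters: because there is a cut-edge separating $X$ from $Y$ in each $N_i$, taking the union of directed paths between leaves of $A$, then suppressing degree-two vertices and removing parallel edges, produces precisely the semi-directed network $N_i/X$ in which the subnetwork on $X$ has been collapsed to the single leaf $x$. Given the definition of $N/X$, this verification is essentially bookkeeping.

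Once this identification is in place, the hypothesis $\V_{N_1/X} \not\subseteq \V_{N_2/X}$ reads as $\V_{N_1|_A} \not\subseteq \V_{N_2|_A}$, and Lemma~\ref{lem:subnetVarieties} immediately yields $\V_{N_1} \not\subseteq \V_{N_2}$. I do not anticipate any real obstacle: the argument is purely a definitional rewriting followed by application of the previous lemma. The only place that warrants a moment of care, as noted above, is confirming that the restriction operation on a network possessing the $X-Y$ split coincides on the nose with the $N/X$ construction; this is precisely what makes the split hypothesis essential to the statement.
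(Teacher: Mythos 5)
Your proposal is correct and matches the paper's intended argument exactly: the paper gives no written proof, stating only that the lemma ``follows immediately from Lemma~\ref{lem:subnetVarieties} and the definition of $\N/X$,'' which is precisely your route of setting $A = \{x\}\cup Y$ so that $N_i|_A = N_i/X$ and invoking Lemma~\ref{lem:subnetVarieties}. The only remark is that the identification you flag as needing verification is in fact immediate, since $N/X$ is \emph{defined} as $N|_{\{x\}\cup Y}$; one only needs to use the same representative $x$ for both networks.
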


Let $\N$ be an~$n$-leaf triangle-free level-$1$ semi-directed network on $\leafSet$ and $C$ a cycle in $\N$.
Let $e_1, \dots, e_s$ be the cut-edges incident to $C$.
Then the \emph{partition induced by $C$} is the partition $X_1|\dots |X_s$ of $\leafSet$ such that $x \in X_i$ if and only if $x$ is separated from $C$ by $e_i$.
We say $X_i$ is \emph{below the reticulation vertex} if $e_i$ is the edge incident to the reticulation vertex in $C$.
If $X_i$ is below the reticulation vertex in $C$ then we also say that $x$ is \emph{below the reticulation vertex} for any $x \in X_i$.

We say a set of three or more leaves $\{x_1, \dots, x_t\}$ \emph{meet at a cycle $C$} if each leaf in $\{x_1, \dots, x_t\}$ appears in a different set of the partition induced by $C$.
We say that they \emph{induce} a cycle in $\N$ if $\N|_{\{x_1, \dots, x_t\}}$ is a $t$-cycle network. 
Note that if the set of leaves $\{x_1, \dots , x_t\}$ induce a cycle then they must meet at a cycle, but the converse does not hold unless one of $\{x_1 \dots, x_t\}$ is below the reticulation vertex.
As an example consider the network in Figure~\ref{fig:bigCentralCycle}: 
$\{a_1, a_2, a_3\}$ meet at the cycle $C_1$ but do not induce a cycle, whereas $\{x, a_1, a_2, a_3\}$ also induce a cycle.

Observe that if $\{x_1, \dots ,x_t\}$ ($t \geq 3$) meet at a cycle, then they meet in exactly one cycle in $N$, i.e., this cycle is unique in~$N$.
Denote this cycle by $C_{\N}(x_1,\dots, x_t)$.
(Note that  $C_{\N}(x_1,\dots, x_t)$ is not well-defined if $\{x_1, \dots ,x_t\}$ do not all meet at a cycle.)


Let~$N_1$ and~$N_2$ be distinct $n$-leaf triangle-free level-$1$ semi-directed networks on $\leafSet$, and let~$C_1$ be a cycle in~$N_1$ that induces a partition 
$A_1|\dots|A_s|X'$ with $X'$ below the reticulation vertex.
Let~$C_2$ be a cycle in~$N_2$ that induces a partition $B_1|\dots|B_t|X'$, with~$X'$ below the reticulation vertex.
We say that~$C_2$ \emph{refines}~$C_1$ if~$B_1|\dots |B_t$ is a refinement of $A_1|\dots|A_s$, i.e., if~$\bigcup_{i=1}^s A_i = \bigcup_{j=1}^t B_j$ and every pair of leaves~$a,b$ that are contained in different sets in~$A_1|\dots|A_s$ also appear in different sets in $B_1|\dots|B_t$.
See Figure~\ref{fig:CycleRefinement}.

\begin{figure}
    \centering
    \includegraphics[width=\textwidth]{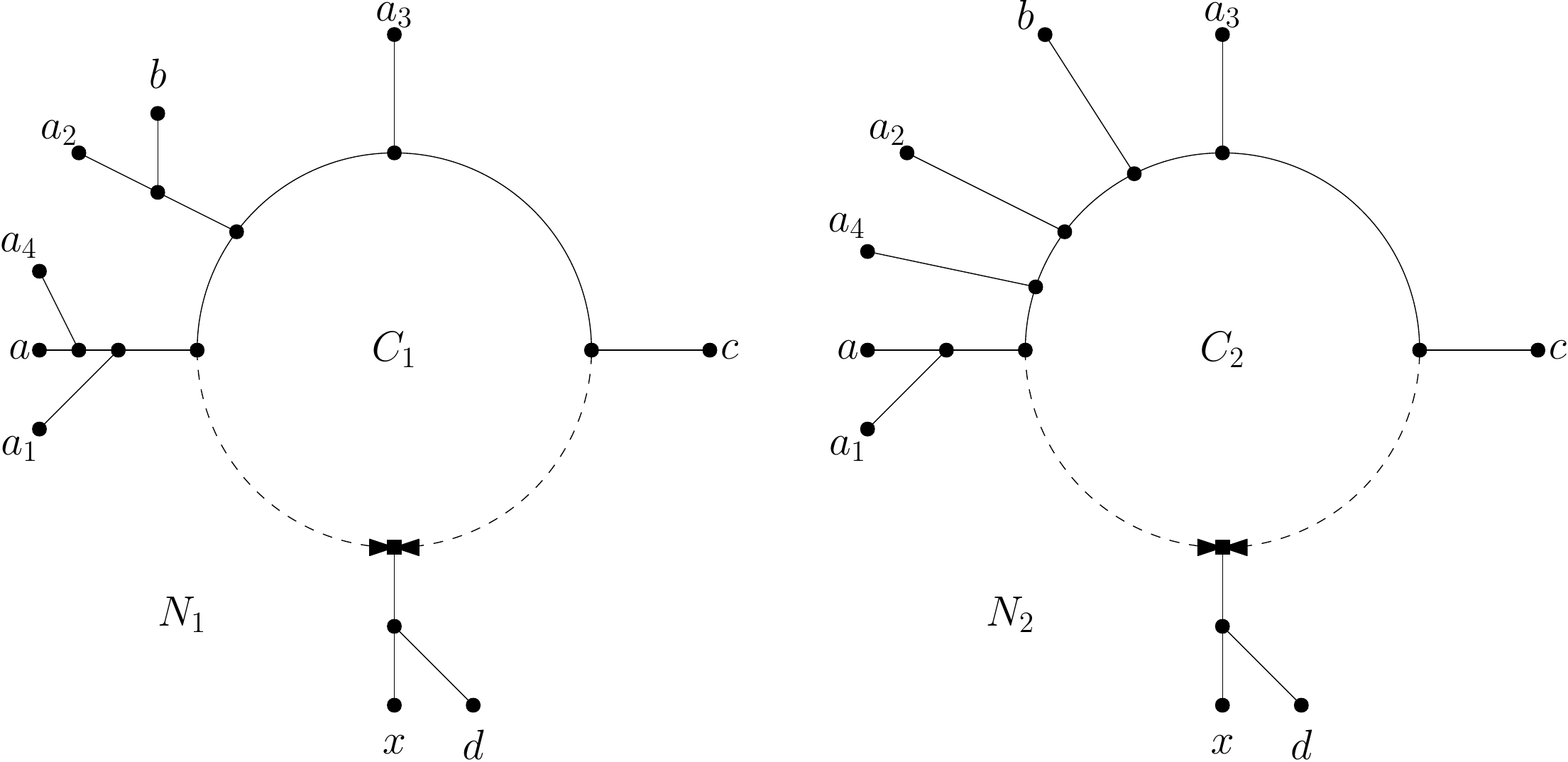}
    \caption{Two triangle-free level-1 semi-directed networks~$N_1$ and~$N_2$ on taxa set~$\{a,a_1,a_2,a_3,a_4,b,c,d,x\}$.
    The cycle~$C_1$ in~$N_1$ induces a partition~$\{a,a_1,a_4\}|\{a_2,b\}|\{a_3\}|\{c\}|\{x,d\}$ and the cycle~$C_2$ in~$N_2$ induces a partition~$\{a,a_1\}|\{a_4\}|\{a_2\}|\{b\}|\{a_3\}|\{c\}|\{x,d\}$. The cycle~$C_2$ refines~$C_1$.}
    \label{fig:CycleRefinement}
\end{figure}

We recall a combinatorial result from~\cite{banos2019} on four-leaf induced cycles. 
We state the result using notation and terms from this paper.

\begin{lemma}[Lemmas 12 and 13 of \cite{banos2019}]\label{lem:3leaves}
 Let~$N$ be an~$n$-leaf triangle-free level-1 semi-directed network on~$\leafSet$.
 If two distinct subsets of four leaves induce a 4-cycle, where three leaves in the two sets are the same, then the five leaves (the union of the two sets) meet at the same cycle.
 In other words, let~$a,b,c,d,e\in \leafSet$ be leaves of~$N$ such that~$N|_{\{a,b,c,d\}}$ and~$N|_{\{a,b,c,e\}}$ are both 4-cycle networks.
 Then~$\{a,b,c,d\}$ and~$\{a,b,c,e\}$ meet at the same cycle.
\end{lemma}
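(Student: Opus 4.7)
The plan is to reduce the claim to the uniqueness (observed in the paper) of the cycle of $N$ at which three or more leaves meet. Specifically, I will first show that if $N|_S$ is a $4$-cycle network with $|S|=4$, then the four leaves of $S$ meet at some cycle of $N$; applying this to $S=\{a,b,c,d\}$ and $S=\{a,b,c,e\}$ yields cycles $C_1$ and $C_2$ of $N$. Since $\{a,b,c\}$ is a common triple that meets at both $C_1$ and $C_2$, the uniqueness observation forces $C_1=C_2$, which is exactly the conclusion of the lemma.

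The crux of the argument is the correspondence between the $4$-cycle in $N|_S$ and a cycle of $N$. The restriction is obtained by taking a subgraph of $N$ (the union of all paths between leaves of $S$), then suppressing degree-$2$ vertices and identifying parallel edges; these operations can destroy cycles but cannot create new ones, so every cycle of $N|_S$ is inherited from a cycle $C$ of $N$. Given the partition $X_1|\dots|X_s$ of $\leafSet$ induced by $C$, only the cut-edges incident to parts $X_i$ that intersect $S$ persist in the restriction, while the other cycle vertices become degree two and are suppressed. Hence the length of the descendant cycle equals the number of parts of the partition meeting $S$. For $N|_S$ with $|S|=4$ to be a $4$-cycle network, we need exactly four such parts, each containing exactly one leaf of $S$; this places $\{a,b,c,d\}$ (respectively $\{a,b,c,e\}$) in four distinct parts of the partition induced by $C_1$ (respectively $C_2$), and in particular $\{a,b,c\}$ meets at both cycles.

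The main obstacle is careful bookkeeping around the restriction operation: one must verify that the $4$-cycle of $N|_S$ genuinely descends from a single cycle of $N$, rather than arising as some concatenation of arcs through two distinct cycles. The triangle-free, level-$1$ hypothesis plays a crucial role here, since distinct cycles share at most one vertex, which constrains how cycles of $N$ can interact within the union of paths defining the restriction. Once this structural correspondence is pinned down, the conclusion follows immediately from the observation, already available from the paper's preliminaries, that a triple of leaves can meet at most one cycle of $N$.
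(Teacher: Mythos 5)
Your argument is sound, but note that the paper itself offers no proof of this lemma to compare against: the statement is imported verbatim from Ba\~{n}os (2019), Lemmas 12 and 13, so what you have produced is in effect a self-contained replacement for an external citation, built only from the paper's own restriction machinery and its observation that a set of three or more leaves meets at most one cycle. The chain of reasoning is correct: every cycle of $N|_S$ descends from a cycle $C$ of $N$, since taking a subgraph, suppressing degree-two vertices, and identifying parallel edges never create cycles and every cycle of a level-1 network lies in a single biconnected component; a cycle of $N$ that survives the restriction does so with length equal to the number of blocks of its induced partition that intersect $S$; hence a $4$-cycle in $N|_S$ with $|S|=4$ places the four leaves of $S$ in four distinct blocks, i.e.\ $S$ meets at $C$; and the uniqueness observation applied to the common triple $\{a,b,c\}$ then forces $C_{N}(a,b,c,d)=C_{N}(a,b,c,e)$. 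One small imprecision to tighten: the claim that ``the length of the descendant cycle equals the number of parts meeting $S$'' holds only when the block below the reticulation vertex intersects $S$; otherwise the reticulation vertex and its two incoming edges lie on no path between leaves of $S$ and the cycle disappears from $N|_S$ altogether (this is exactly the paper's remark that meeting at a cycle does not imply inducing one). This costs you nothing, because you only use the implication in the direction ``a $4$-cycle appears in the restriction, therefore four blocks meet $S$,'' and a destroyed cycle contributes no cycle to $N|_S$. Finally, you lean on the uniqueness observation, which the paper asserts without proof; that is legitimate, and in any case it follows quickly from the fact that two distinct cycles $C$ and $C'$ of a level-1 network are vertex-disjoint, so $C'$ lies beyond a single cut-edge of $C$ with corresponding block $X_i$, all leaves outside $X_i$ then fall into one block of the partition induced by $C'$, and at most one of three leaves meeting at $C$ can lie in $X_i$.
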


\begin{lemma}\label{lem:4CycleSupersetImpliesRefinedCycles}
Let $\N_1$ and $\N_2$ be distinct~$n$-leaf triangle-free level-$1$ semi-directed
networks on $\leafSet$.
Suppose that for any $a,b,c,d \in \leafSet$, if $\N_1|_{\{a,b,c,d\}}$ is a 4-cycle, then  $\N_2|_{\{a,b,c,d\}} =  \N_1|_{\{a,b,c,d\}}$.
Then every cycle in~$N_1$ is refined by a cycle in~$N_2$.
\end{lemma}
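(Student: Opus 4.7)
The plan is to take an arbitrary cycle $C_1$ in $\N_1$ with induced partition $A_1|\cdots|A_s|X'$ (with $X'$ below the reticulation, so that $s\geq 3$ by triangle-freeness), fix a leaf $x\in X'$ and representatives $a_i\in A_i$ for $i=1,\ldots,s$, and then use the four-leaf hypothesis to locate a cycle $C_2$ in $\N_2$ that refines $C_1$. For any three distinct $i,j,k$, the four leaves $\{x,a_i,a_j,a_k\}$ lie in four different parts of the partition induced by $C_1$ with $x$ below the reticulation, so they induce a $4$-cycle in $\N_1$; by hypothesis $\N_2|_{\{x,a_i,a_j,a_k\}}$ is that same $4$-cycle. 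Iteratively applying Lemma~\ref{lem:3leaves} to pairs of such $4$-subsets sharing three leaves shows that $\{x,a_1,\ldots,a_s\}$ all meet at a single cycle $C_2$ of $\N_2$, and since each restricted $4$-cycle places $x$ below its reticulation (which in turn is the reticulation of $C_2$), $x$ lies below the reticulation of $C_2$.

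The second step is to show that the ``below-the-reticulation'' sides of $C_1$ and $C_2$ coincide, i.e.\ $X'=Y$, where $Y$ denotes the branch of $C_2$ below its reticulation. For any $y\in X'$ the set $\{y,a_i,a_j,a_k\}$ induces a $4$-cycle in $\N_1$ with $y$ below the reticulation, so the same holds in $\N_2$; combining with the previous step via Lemma~\ref{lem:3leaves} shows that $y$ meets $C_2$ and lies below its reticulation, giving $X'\subseteq Y$. For the reverse inclusion I would argue by contradiction: if some $y\in Y$ lay in $A_i$, pick $j,k$ distinct and both different from $i$; then $\{x,y,a_j,a_k\}$ induces a $4$-cycle in $\N_1$ and hence in $\N_2$, but in $\N_2$ the leaves $x$ and $y$ share the branch $Y$, so only three branches of $C_2$ are represented and $C_2$ contributes only a triangle to $\N_2|_{\{x,y,a_j,a_k\}}$; since a $4$-leaf $4$-cycle contains no triangle, this contradicts the hypothesis.

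For the refinement condition on the remaining parts, I would take $a\in A_i$ and $a'\in A_j$ with $i\neq j$, pick $k\notin\{i,j\}$, and consider $\{a,a_j,x,a_k\}$ and $\{a,a',x,a_k\}$. The first induces a $4$-cycle in $\N_1$ and hence in $\N_2$; since $\{a_j,x,a_k\}$ already meets at $C_2$, the uniqueness of the cycle at which three or more leaves meet forces this $4$-cycle to arise from $C_2$, so $a$ meets $C_2$ in a branch distinct from $Y$ and from the branches of $a_j$ and $a_k$. By symmetry $a'$ also meets $C_2$; if $a$ and $a'$ shared a branch of $C_2$, then $C_2$ would again contribute only a triangle to $\N_2|_{\{a,a',x,a_k\}}$, contradicting that this restriction is a $4$-cycle. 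This verifies that every pair of leaves separated by $C_1$ is also separated by $C_2$, and hence that $C_2$ refines $C_1$.

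The hard part is the reverse inclusion $Y\subseteq X'$: the hypothesis concerns only $4$-cycle restrictions of $\N_1$, so one cannot directly compare a triangle restriction to a $4$-cycle restriction. The resolution is the simple but essential structural observation that a $4$-leaf $4$-cycle contains no triangle, together with the uniqueness of the cycle at which three or more leaves meet; these two facts together make the triangle/$4$-cycle collision visible, and the same mechanism then also yields the separation of the non-reticulation parts.
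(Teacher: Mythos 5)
Your proof is correct and follows essentially the same strategy as the paper's: transfer induced $4$-cycles from $N_1$ to $N_2$ via the hypothesis, chain overlapping $4$-subsets through Lemma~\ref{lem:3leaves} to locate a single cycle $C_2$ of $N_2$ containing the representatives, and then verify the refinement conditions part by part. The only local difference is that where the paper reads the below-the-reticulation part and the separation of leaf pairs directly off the transferred $4$-cycles (again via Lemma~\ref{lem:3leaves}), you derive them by contradiction from the observation that a cycle meeting only three of the four restricted leaves would leave a triangle in the restriction; both arguments are sound.
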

\begin{proof}
Let~$C_1$ be a cycle in~$N_1$ that induces a partition $A_1|\dots|A_s|X'$ with $X'$ below the reticulation vertex.
Choose any $a_1 \in A_1, a_2 \in A_2, a_3 \in A_3, x \in X'$. As $\N_1|_{\{a_1,a_2,a_3,x\}}$ is a $4$-cycle,  $\N_2|_{\{a_1,a_2,a_3,x\}}$ is the same $4$-cycle. So let $C_2 = C_{\N_2}(a_1,a_2,a_3,x)$.
We claim that $C_2$ is the desired cycle of~$N_2$ that refines~$C_1$.

To see this, first consider any $a \in A_h, b \in A_i, c \in A_j, d \in X'$ where $1 \leq h < i < j \leq s$.
Then $a,b,c,d$ all meet at $C_1$ and so $C_{\N_1}(a,b,c,d)$ is well-defined. Since $i,j>1$, we can replace $a$ with $a_1$ and have that the set of leaves $\{a_1, b,c,d\}$ also meet at $C_1$. By similar arguments, we also have that $\{a_1, a_2,c,d\}$  meet at $C_1$ and $\{a_1, a_2,a_3,d\}$  meet at $C_1$.
Moreover each of these sets of $4$ leaves induces a cycle in $N_1$ (as $d$ is below the reticulation vertex in $C_1$), and so also induce a cycle in $N_2$. Thus we have that $\N_2|_{\{a,b,c,d\}}$, $\N_2|_{\{a_1,b,c,d\}}$, $\N_2|_{\{a_1,a_2,c,d\}}$, $\N_2|_{\{a_1,a_2,a_3,d\}}$ are all $4$-cycles, and in particular $C_{\N_2}(a,b,c,d)$, $C_{\N_2}(a_1,b,c,d)$, $C_{\N_2}(a_1,a_2,c,d)$, $C_{\N_2}(a_1,a_2,a_3,d)$ are all well-defined. (See Figure~\ref{fig:cycleRefinementProof}.)
By Lemma~\ref{lem:3leaves}, we must have that $C_{\N_2}(a,b,c,d) =  C_{\N_2}(a_1,b,c,d) = C_{\N_2}(a_1,a_2,c,d) = C_{\N_2}(a_1,a_2,a_3,d) = C_{\N_2}(a_1,a_2,a_3,x) = C_2$.






We thus have that for  $a \in A_h, b \in A_i, c \in A_j, d \in X'$ with $h < i < j$, the set of leaves $\{a,b,c,d\}$ all meet at $C_2$.

Now consider any two leaves $a',b'$ such that $a',b'$ appear in different sets in $A_1|\dots|A_s|X'$.
By choosing additional leaves $c',d'$ from other sets, such that one of $a',b',c',d'$ is in $X'$,
we have that $C_{\N_2}(a',b',c',d') = C_{\N_2}(a,b,c,d)$ where $a\in A_h, b\in A_i,c\in A_j, d \in X'$, for some $h < i <j$. Then by the above we have that $C_{\N_2}(a',b',c',d') = C_2$. In particular, $a',b'$ appear in different sets in the partition induced by $C_2$.
This implies that the partition induced by $C_2$ is a refinement of the partition induced by $C_1$. Moreover, observe that $a'$ is below the reticulation vertex in $C_2$ if and only if $a' \in X'$ (since the only element of $\{a,b,c,d\}$ below the reticulation vertex in $C_2$ is the one from $X'$).
Thus, the partition induced by $C_2$ is $B_1|\dots|B_t|X'$ with $X'$ below the
  reticulation and $B_1|\dots |B_t$  a refinement of $A_1|\dots|A_s$.
  Therefore,~$C_2$ refines~$C_1$.
  \qed
\end{proof}

 \begin{figure}
    \centering
    \begin{subfigure}[t]{0.4\textwidth}
        \includegraphics[width=\textwidth]{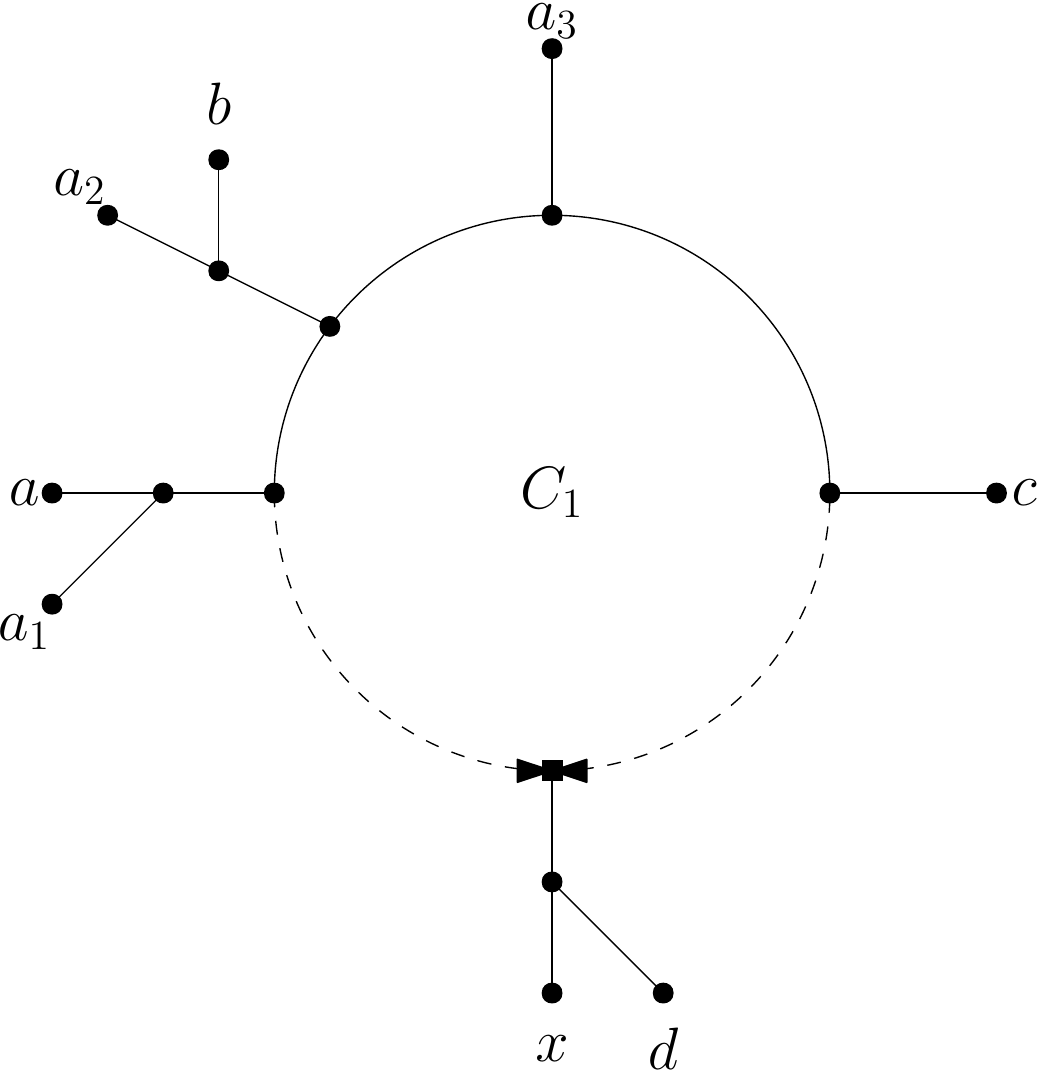}
        \caption{The cycle $C_1$.}\label{fig:bigCentralCycle}
    \end{subfigure}\hfill
    \begin{subfigure}[t]{0.4\textwidth}
        \includegraphics[width=0.8\textwidth]{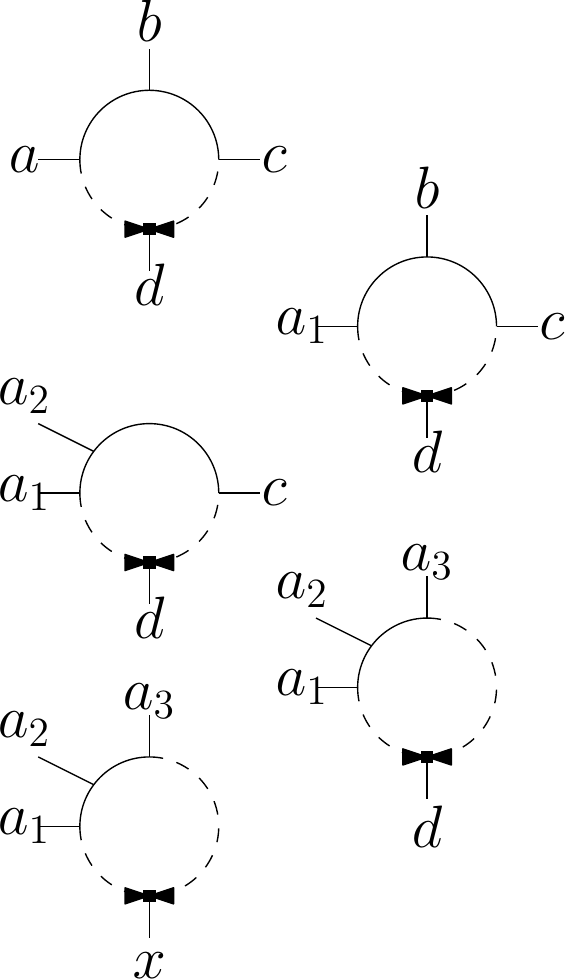}
        \caption{Some $4$-cycles that appear in $\N_1$ and thus in $\N_2$.}
    \end{subfigure}
    \caption{Illustration of part of the proof of Lemma~\ref{lem:4CycleSupersetImpliesRefinedCycles}. 
On the left we have an example of some leaves joining a cycle $C_1$ in $\N_1$, such that $\{a,b,c,d\}$ all meet at $C_1$ with $d$ below the reticulation vertex, and $\{a_1,a_2,a_3,x\}$ all meet at $C_1$ with $x$ below the reticulation vertex. The cycles on the right are all induced $4$-cycles in $\N_1$, and therefore by assumption are also induced $4$-cycles in $\N_2$. As the sets $\{a,b,c,d\}$ and  $\{a_1,b,c,d\}$ differ by only 1 element, they must meet at the same cycle in $\N_2$. By repeating a similar argument, we can show that $\{a,b,c,d\}$ and $\{a_1,a_2,a_3,x\}$ meet at the same cycle in $\N_2$.}
\end{figure}\label{fig:cycleRefinementProof}

 \begin{lemma}\label{lem:CommonORConflict}
 Suppose that every cycle in $\N_1$ is refined by a cycle in $\N_2$.
 If~$\N_2$ has a non-trivial split,
 then either $\N_1, \N_2$ share a non-trivial common split or they have conflicting splits.
 \end{lemma}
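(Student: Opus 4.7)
The plan is to argue by contradiction: assume $N_1$ and $N_2$ share no non-trivial common split and have no conflicting splits, so that the hypothesized non-trivial split $A|B$ of $N_2$ is compatible with every split of $N_1$ yet fails to be a split of $N_1$. The goal is to derive a contradiction from this assumption.

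First I would introduce the \emph{cut-tree} $T(N)$ of a level-1 semi-directed network $N$, obtained by contracting each cycle of $N$ to a single vertex. Two observations follow immediately: the non-trivial splits of $N$ coincide with those of $T(N)$, since both correspond to cut-edges of $N$; and every internal vertex of $T(N_1)$ has degree $3$ (if it comes from a tree or reticulation vertex of $N_1$) or degree $\geq 4$ (if it comes from a contracted cycle, using triangle-freeness to force cycle length $\geq 4$). By the standard splits-theoretic fact that a compatible non-realized split can be displayed by refining a single vertex, the Steiner subtrees of $A$ and $B$ in $T_1 := T(N_1)$ must meet at a unique vertex $v$; its neighbors partition into sets $S_A, S_B$ (with $|S_A|, |S_B| \geq 2$, otherwise $A|B$ would already be a split of $T_1$), and the neighbor leaf-sets $X_1, \dots, X_s$ satisfy $A = \bigcup_{i \in S_A} X_i$ and $B = \bigcup_{i \in S_B} X_i$. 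Since $v$ has degree $\geq 4$, it corresponds to a contracted cycle $C$ of $N_1$, and $X_1|\dots|X_s$ is the partition of $\leafSet$ induced by $C$; say $X_k$ is the part below the reticulation of $C$.

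Next I would apply the refinement hypothesis: $C$ is refined by some cycle $C_2$ in $N_2$ with induced partition $Y_1|\dots|Y_t|X_k$ (with $X_k$ below the reticulation of $C_2$) refining $X_1|\dots|X_{k-1}|X_{k+1}|\dots|X_s$. A short count using $|S_A|, |S_B| \geq 2$ and the fact that every $X_i$ with $i \neq k$ is a non-empty union of some $Y_j$'s shows that, whichever side $X_k$ lies on, both $A$ and $B$ contain leaves from at least two of the parts $Y_1, \dots, Y_t, X_k$. On the other hand, $A|B$ is realized as a split of $N_2$ by removing some cut-edge $e_2$; examining $e_2$ in $T(N_2)$ relative to the vertex $v_2$ representing $C_2$, whether or not $e_2$ is incident to $v_2$, one side of $A|B$ must lie entirely within a single part of $C_2$'s partition, which contradicts the spanning property just established.

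The main obstacle lies in the bookkeeping around the below-reticulation part $X_k$: since the definition of refinement preserves $X_k$ rather than subdividing it, the two cases $k \in S_A$ and $k \in S_B$ require slightly different counts. Both cases rely on $|S_A|, |S_B| \geq 2$ together with each non-reticulation $X_i$ yielding at least one $Y_j$, so confirming the spanning conclusion in both cases is the technical crux of the argument.
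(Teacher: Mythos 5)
Your proof is correct, and it reaches the same punchline as the paper's proof by a genuinely different route. The paper argues directly: it fixes $b\in B$, takes the edge of $N_1$ furthest from $b$ that still separates $b$ from $A$, and then splits into cases on the endpoint $u$ of that edge --- if $u$ lies on a cycle $C_1$ it shows the partition induced by $C_1$ subdivides both $A$ and $B$ and invokes refinement to contradict the $A$--$B$ split of $N_2$, and if $u$ is a degree-three tree vertex it explicitly exhibits a pair of conflicting splits. You instead argue by contradiction: assuming no conflicting splits makes $A|B$ compatible with every split of $N_1$, so the splits-equivalence machinery on the cut-tree localizes the obstruction at a unique vertex $v$ with at least two neighbouring components on each side, forcing $\deg v\ge 4$ and hence $v$ to be a contracted cycle; from there your refinement-plus-counting argument is essentially the paper's cycle case, with the final contradiction (one side of any split of $N_2$ lies inside a single part of $C_2$'s induced partition, while both $A$ and $B$ meet at least two parts) spelled out more explicitly than in the paper. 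The trade-off is that your version absorbs the tree-vertex case into the compatibility hypothesis rather than constructing the conflicting splits by hand, at the cost of importing the standard fact that a split compatible with a tree's splits but not realized by it is displayed by refining a single vertex; the paper's version is more self-contained and constructive. Your bookkeeping around the below-reticulation block $X_k$ (which refinement preserves rather than subdivides) is handled correctly, so I see no gap.
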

 \begin{proof}
 Let $A-B$ be a non-trivial split in $\N_2$.
 Fix an arbitrary $b \in B$, and take the edge $e$ in $\N_1$ furthest from $b$ such that $e$ separates $b$ from $A$.
 If $e$ separates $A$ from $B$, then $A-B$ is a non-trivial common split and we are done.
 
 Otherwise, let $u$ be the vertex in $e$ nearer to $A$.
 If $u$ is on a cycle, then denote this cycle by $C_1$.
 Let $X_1|\dots|X_s$ be the partition induced by $C_1$, noting by construction that $X_i \cap A = \emptyset$ for the set $X_i$ containing $b$ (since $X_i$ is the set of leaves reachable from $C$ via $e$).
 If $X_j \supseteq A$ for any $j$, then the corresponding edge $e_j$ leaving $C$ is an edge that is further away from $b$ than $e$ and which separates $A$ from $b$, contradicting the choice of $e$.
 So we may assume that the partition $X_1|\dots|X_s$ must \emph{subdivide} $A$ (that is, $A$ has non-empty intersection with at least two sets $X_j, X_h$).
 Furthermore  $X_1|\dots|X_s$ must subdivide $B$, as otherwise the set $X_i$ (which contains $b$) contains all of $B$ and also none of $A$,
 which would imply that $A-B$ is a common split.
 So $C_1$ is a cycle in $\N_1$ whose induced partition subdivides both $A$ and $B$. As every cycle in $\N_1$ is refined by a cycle in $\N_2$, this implies that some cycle in $\N_2$ also subdivides both $A$ and $B$. But this contradicts the fact that $\N_2$ contains an $A-B$ split.
 (See Figure~\ref{fig:claim1cycle}.)
 
If $u$ is not on a cycle, let $f$ and $g$ be the other edges incident to $u$. By choice of $e$, neither $f$ nor $g$ can separate $A$ from $b$. Thus there is at least one element $a \in A$ reachable from $u$ via $f$, and at least one element $a' \in A$ reachable from $u$ via $g$.
As $e$ does not separate $A$ from $B$, there is at least one $b' \in B$ that is reachable from $u$ via either $f$ or $g$, say (without loss of generality) $f$.
Then let $X-Y$ be the split induced by $f$, with $Y$ the set containing $b$.
Observe that $a,b' \in X$ while $a', b \in Y$.
Thus we have that  $X \cap A, X \cap B,  Y \cap A,  Y\cap B$ are all non-empty, and so $\N_1$ and $\N_2$ have conflicting splits.
 (See Figure~\ref{fig:claim1branch}.)
 \qed
 \end{proof}
 
  \begin{figure}
    \centering
    \begin{subfigure}[b]{0.4\textwidth}
        \includegraphics[width=\textwidth]{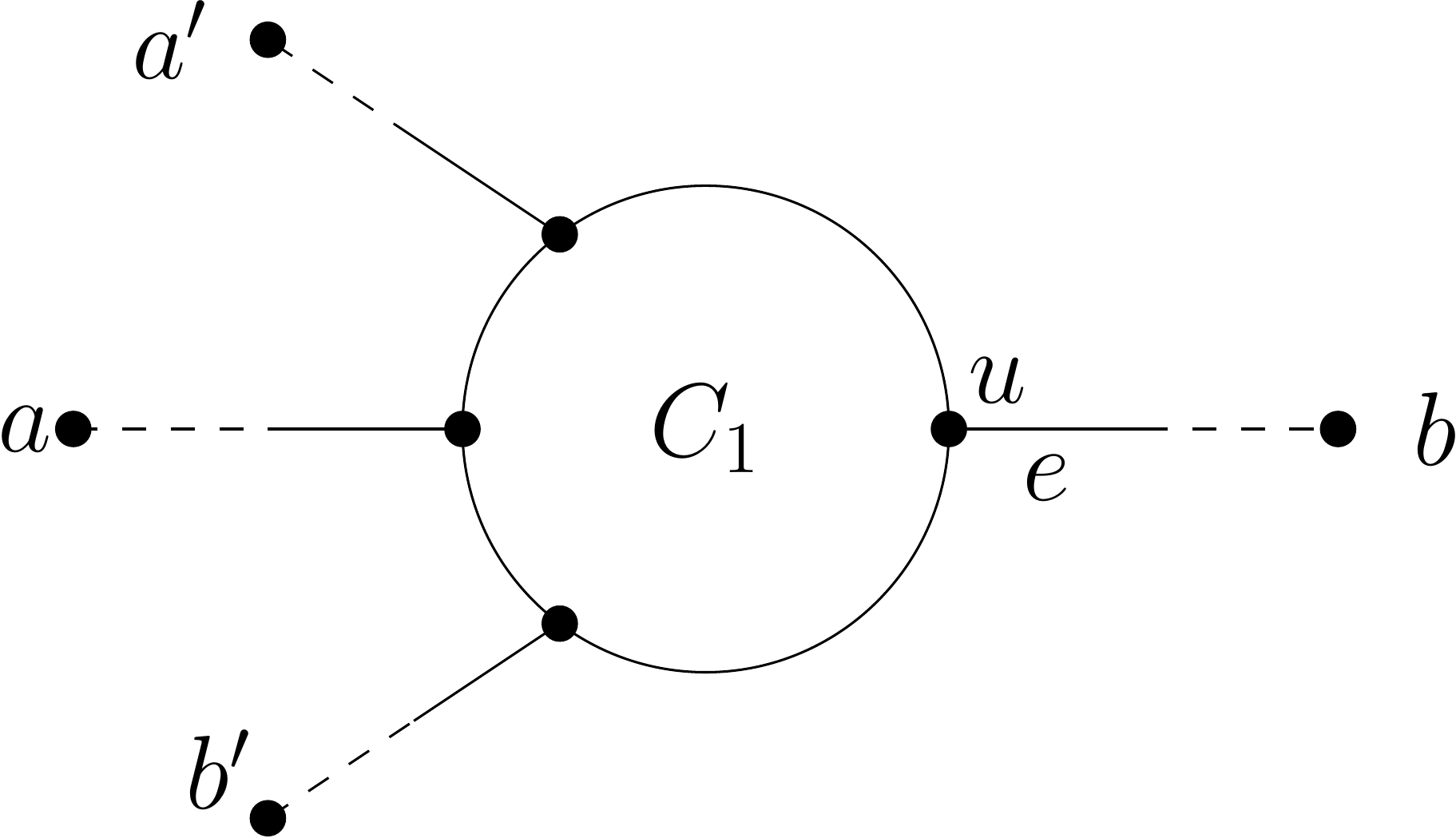}
        \caption{Case when $u$ is on a cycle.}
        \label{fig:claim1cycle}
    \end{subfigure}
    \begin{subfigure}[b]{0.4\textwidth}
        \includegraphics[width=0.8\textwidth]{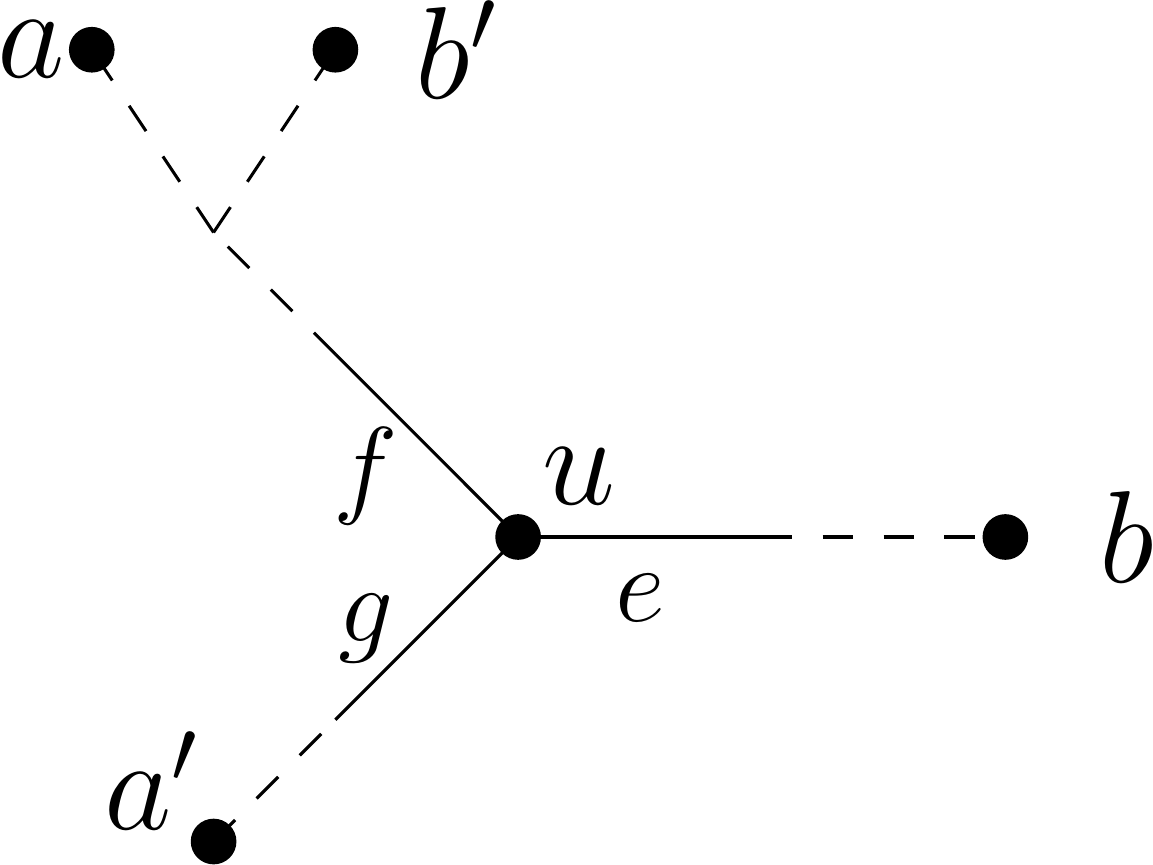}
        \caption{Case when $u$ is not on a cycle.}
        \label{fig:claim1branch}
    \end{subfigure}
    \caption{Illustration of~$N_1$ in the proof of Lemma~\ref{lem:CommonORConflict}.}
\end{figure}\label{fig:claim1proof}

%

\section{Distinguishability of triangle-free level-\texorpdfstring{$1$}{1} networks}
\label{sec: inductive proof}

%
%
Theorem~\ref{thm: main} follows as a corollary of the next lemma.

\begin{lemma}\label{lem:inductiveProof}
 Let $\N_1$ and $\N_2$ be distinct~$n$-leaf triangle-free level-1 semi-directed networks on~$\leafSet$ and $r(\N_1)\geq r(\N_2)$. 
 Then $\V_{\N_1} \not\subseteq \V_{\N_2}$ under the JC, K2P, and K3P constraints.
\end{lemma}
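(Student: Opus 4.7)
The plan is to proceed by strong induction on $n$, the number of leaves. For the base case $n=4$, every triangle-free level-1 semi-directed network is either a quartet tree or a 4-cycle network, and parts (i), (iii), (iv) of Lemma~\ref{lem:buildingBlocks} directly handle each pair of distinct such networks with $r(\N_1)\geq r(\N_2)$.

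For the inductive step I would split on whether there exists a 4-leaf subset $A\subseteq\leafSet$ such that $\N_1|_A$ is a 4-cycle and $\N_2|_A\neq \N_1|_A$. If such an $A$ exists, then $\N_2|_A$ is a tree, a triangle (noting that restrictions of triangle-free networks can produce triangles), a different 4-cycle, or a double-triangle, and in each case Lemma~\ref{lem:buildingBlocks} combined with Table~\ref{tab:InclusionOverview} yields $\V_{\N_1|_A}\not\subseteq\V_{\N_2|_A}$; Lemma~\ref{lem:subnetVarieties} then lifts this to $\V_{\N_1}\not\subseteq\V_{\N_2}$. Otherwise every 4-cycle restriction of $\N_1$ is preserved in $\N_2$, so Lemma~\ref{lem:4CycleSupersetImpliesRefinedCycles} gives that every cycle of $\N_1$ is refined by some cycle of $\N_2$; the refinement is injective because distinct cycles of $\N_1$ have distinct below-reticulation leaf sets, which combined with the hypothesis $r(\N_1)\geq r(\N_2)$ forces $r(\N_1)=r(\N_2)$.

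Next I would split on whether $\N_2$ has a non-trivial split. If it does, Lemma~\ref{lem:CommonORConflict} gives either conflicting splits or a common non-trivial split. For conflicting splits, picking one leaf from each of the four non-empty intersections of the two partitions produces a 4-leaf restriction in which the induced splits still conflict; Lemma~\ref{lem:conflictingSplits} then gives distinguishability (hence $\V_{\N_1|_A}\not\subseteq\V_{\N_2|_A}$), and Lemma~\ref{lem:subnetVarieties} concludes. For a common non-trivial split $X-Y$, Observation~\ref{obs:splitSideDiffers} produces a side on which the quotient networks disagree; using the additivity $r(\N_i)=r(\N_i/X)+r(\N_i/Y)$ together with $r(\N_1)=r(\N_2)$, I can always pick a side (possibly swapping $X$ and $Y$) on which the inductive hypothesis $r(\N_1/\cdot)\geq r(\N_2/\cdot)$ remains valid. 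Triangle-freeness and level-1ness are preserved by contracting a pendant side, so the induction hypothesis applies on strictly fewer than $n$ leaves, and Lemma~\ref{lem:splitSideInheritance} finishes this case.

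The main obstacle is the remaining sub-case in which $\N_2$ has no non-trivial split. Here every cut-edge of $\N_2$ must be pendant, so $\N_2$ is either a star tree or a single $n$-cycle with every leaf attached directly to a cycle vertex. In the star-tree case, $r(\N_1)=r(\N_2)=0$ forces $\N_1$ to be a tree as well, and any two distinct trees are separated at the 4-leaf level by Lemma~\ref{lem:buildingBlocks}(i) and Lemma~\ref{lem:subnetVarieties}. In the single-cycle case, $r(\N_1)=r(\N_2)=1$ and $\N_1$ is itself a single-cycle network with cycle length at least four by triangle-freeness, so both networks lie in the class covered by Theorem~\ref{thm: oneRetic}; that theorem (whose proofs in the cited works actually establish distinguishability of distinct networks in the class) yields $\V_{\N_1}\not\subseteq\V_{\N_2}$ and closes the argument. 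The delicate point is that the naive 4-leaf restriction approach can fail here, because a pendant cherry in $\N_1$ can restrict to a triangle network whose variety is contained in the variety of the corresponding 4-cycle restriction of $\N_2$; the appeal to the single-reticulation theorem is what breaks this impasse.
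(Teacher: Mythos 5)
Your proof is correct and follows essentially the same route as the paper's: induction on $n$, reduction to the case where every induced $4$-cycle of $\N_1$ is preserved in $\N_2$, the cycle-refinement lemma, the split trichotomy, and the appeal to the single-reticulation theorem when $\N_2$ is an $n$-cycle. The only differences are cosmetic: you deduce $r(\N_1)=r(\N_2)$ early from injectivity of the refinement (which absorbs the paper's closing contradiction argument for $r(\N_1)\geq 2$), and you make explicit the $4$-leaf restriction step in the conflicting-splits case that the paper leaves implicit.
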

\begin{proof}
 We prove the claim by induction on $n=|\leafSet|$, the number of leaves in $\N_1$ and $\N_2$.
 For the base case, if $n \leq 4$ then either $r(\N_1) = 0$ or $r(\N_1) = 1$.
 If $r(\N_1) = 0$, then $N_1$ and $N_2$ are both trees.
If $r(\N_1) = 1$ and $r(\N_2) = 1$, 
then $N_1$ and $N_2$ are both 4-cycles. 
If $r(\N_1) = 1$ and $r(\N_2) = 0$, then 
$\N_1$ is a $4$-cycle network and $\N_2$ is a tree. For each of these cases, by Lemma~\ref{lem:buildingBlocks}, it follows that
$\V_{\N_1} \not\subseteq \V_{\N_2}$.
Note that we must have~$r(N_1)\le 1$ and~$r(N_2)\le 1$ as these networks are triangle-free. Thus, this covers all cases for~$n\le 4$.
 
 So now assume that $n > 4$ and that the claim is true for all smaller values of $n$.
 We first show that we may assume 
 that any set of 4 leaves that induces a 4-cycle in $\N_1$ induces the same 4-cycle in $\N_2$.
 Indeed, suppose this is not the case, and consider some arbitrary $A \subseteq \leafSet$ with $|A| = 4$ such that $\N_1|_A$ is a $4$-cycle but $\N_2|_A$ is not the same $4$-cycle.
 If $\N_2|_A$ is a \emph{different} $4$-cycle or a double-triangle, then by Lemma~\ref{lem:buildingBlocks}, $\N_1|_A$ and $\N_2|_A$ are distinguishable (and in particular,  $\V_{\N_1|_A} \not\subseteq \V_{\N_2|_A}$).  
 Otherwise, $\N_2|_A$ is either a tree or a $3$-cycle network, and Lemma~\ref{lem:buildingBlocks} implies that $\V_{\N_1|_A} \not\subseteq \V_{\N_2|_A}$.
In either case, $\V_{\N_1|_A} \not\subseteq \V_{\N_2|_A}$ and hence, by Lemma~\ref{lem:subnetVarieties}, we have
$\V_{\N_1} \not\subseteq \V_{\N_2}$.
 
 
 So we may now assume that any set of 4 leaves that induces a 4-cycle in $\N_1$ induces the same 4-cycle in $\N_2$.
 By Lemma~\ref{lem:4CycleSupersetImpliesRefinedCycles}, this implies that every cycle in~$N_1$ is refined by a cycle in~$N_2$.
 By Lemma~\ref{lem:CommonORConflict},~$\N_1$ and~$\N_2$ must have either a non-trivial common split or conflicting splits, or~$\N_2$ must have no non-trivial split.
 It remains to complete the proof in these three cases.

 
 Firstly, if $\N_1,\N_2$ have conflicting splits, then by Lemma~\ref{lem:conflictingSplits} we have
 $\V_{\N_1} \not\subseteq \V_{\N_2}$, as required.
 
 
 Secondly, suppose that $X-Y$ is a non-trivial common split, and consider $\N_1/X$ $\N_2/X$, $\N_1/Y$, $\N_2/Y$ as defined in the beginning of Section~\ref{sec:combinatorial}. 
 Since $|X|, |Y| \geq 2$, each of these networks has fewer than $n$ leaves.
 Thus by the induction hypothesis, if $\N_1/X, \N_2/X$ are distinct and $r(\N_1/X) \geq r(\N_2/X)$, then $\V_{\N_1/X} \not\subseteq \V_{\N_2/X}$, from which it follows that $\V_{\N_1} \not\subseteq \V_{\N_2}$.
 A similar argument holds if  $\N_1/Y, \N_2/Y$ are distinct and $r(\N_1/Y) \geq r(\N_2/Y)$.
 But at least one of these cases must hold.
Indeed, since $r(\N_1/X) + r(\N_1/Y) = r(\N_1) \geq r(\N_2) = r(\N_2/X) + r(\N_2/Y)$, it must hold that  $r(\N_1/X) > r(\N_2/X)$, $r(\N_1/Y) > r(\N_2/Y)$ or $r(\N_1/X) = r(\N_2/X)$ and $r(\N_1/Y) = r(\N_2/Y)$.
If $r(\N_1/X) > r(\N_2/X)$ (or $r(\N_1/Y) > r(\N_2/Y)$) then those networks are clearly distinct. 
Otherwise we have $r(\N_1/X) = r(\N_2/X)$ and $r(\N_1/Y) = r(\N_2/Y)$. We must have that $\N_1/X, \N_2/X$ are distinct or $\N_1/Y, \N_2/Y$ are distinct, since $\N_1$ and $\N_2$ are distinct.
Thus we either have that $\N_1/X, \N_2/X$ are distinct and $r(\N_1/X) \geq r(\N_2/X)$, or $\N_1/Y, \N_2/Y$ are distinct and $r(\N_1/Y) \geq r(\N_2/Y)$.
In either case we have  $\V_{\N_1} \not\subseteq \V_{\N_2}$, as required.
 
 Finally, suppose that~$\N_2$ has no non-trivial split. Then $\N_2$ is an $n$-cycle network, that is, $\N_2$ has a single cycle and every leaf is incident to a vertex on the cycle.
 If $r(\N_1) =1$, then $\N_1$ and $\N_2$ are both networks with exactly one cycle of length at least four. It then follows from 
 Theorem~\ref{thm: oneRetic},
 together with Proposition~\ref{prop: identifiability}, that~$\N_1$ and~$\N_2$ are distinguishable (and, in particular,  $\V_{\N_1} \not\subseteq \V_{\N_2}$). If on the other hand~$r(\N_1) \geq 2$, then consider two cycles~$C_1$ and~$C_2$ in~$\N_1$, with~$X_1'$ the subset of~$\leafSet$ below the reticulation in~$C_1$, and~$X_2'$ the subset of~$\leafSet$ below the reticulation in~$C_2$. Since~$C_1$ and ~$C_2$ are different cycles, $X_1' \neq X_2'$. But then this contradicts the fact that every cycle in~$N_1$ is refined by a cycle in~$N_2$, as the single cycle in~$\N_2$ would have to have both~$X_1'$ and $X_2'$ as the set of leaves below the reticulation. Thus in all cases we have either a contradiction or~$\V_{\N_1} \not\subseteq \V_{\N_2}$, which completes the proof of Lemma~\ref{lem:inductiveProof}.
 \qed
\end{proof}

We are now ready to prove Theorem~\ref{thm: main}, which we restate for convenience.
\setcounter{theorem}{1}
\begin{theorem}
The network parameter of a network-based Markov model under the Jukes-Cantor, Kimura 2-parameter, or Kimura 3-parameter constraints is generically identifiable with respect to the class of models where the network parameter is an $n$-leaf triangle-free, level-1 semi-directed network with $r \geq 0$ reticulation vertices.
\end{theorem}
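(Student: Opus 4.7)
The plan is to derive Theorem~\ref{thm: main} as a direct corollary of Lemma~\ref{lem:inductiveProof} via Proposition~\ref{prop: identifiability}. By Proposition~\ref{prop: identifiability}, generic identifiability of the network parameter within the given class reduces to showing that every pair of distinct networks is distinguishable, that is, that both $\V_{N_1} \not\subseteq \V_{N_2}$ and $\V_{N_2} \not\subseteq \V_{N_1}$ hold for each pair of distinct $n$-leaf triangle-free level-$1$ semi-directed networks $N_1, N_2$ with $r(N_i) \geq 0$.

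Given any such distinct pair, I would first relabel if necessary so that $r(N_1) \geq r(N_2)$. Lemma~\ref{lem:inductiveProof} then immediately produces $\V_{N_1} \not\subseteq \V_{N_2}$. For the reverse non-containment, when $r(N_1) = r(N_2)$, swapping the labels and applying the lemma once more yields $\V_{N_2} \not\subseteq \V_{N_1}$, completing distinguishability. When $r(N_1) > r(N_2)$ strictly, the reverse non-containment is not directly furnished by the lemma; here I would argue that even if $\V_{N_2} \subseteq \V_{N_1}$ as algebraic varieties (which can in principle happen when $N_2$ is obtained as a degeneration of $N_1$ by letting some reticulation parameter $\delta_i$ tend to $0$ or $1$), the image $\mathcal{M}_{N_2}$ sits on the topological boundary of $\mathcal{M}_{N_1}$ since the $\delta_i$ are restricted to the open interval $(0,1)$. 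Hence $\psi_{N_2}^{-1}(\mathcal{M}_{N_1})$ still has Lebesgue measure zero in $\theta_{N_2}$, which is exactly what Definition~\ref{defn: genericallyidentifiable} asks for in this ordering.

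I expect the main obstacle to be essentially already resolved in Lemma~\ref{lem:inductiveProof}; the theorem is a repackaging of that lemma through the identifiability framework. The substantive work is entirely in the lemma's inductive proof: the induction on the leaf count, the three-case analysis (non-trivial common split, conflicting splits, and the case where $N_2$ has no non-trivial split so that $N_2$ is an $n$-cycle), the cycle-refinement machinery of Section~\ref{sec:combinatorial}, and the appeal to the base results of Lemma~\ref{lem:buildingBlocks} and the one-reticulation Theorem~\ref{thm: oneRetic}.
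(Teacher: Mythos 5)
Your overall route---reduce to distinguishability via Proposition~\ref{prop: identifiability} and extract both non-containments from Lemma~\ref{lem:inductiveProof}---is exactly the paper's, and in the case $r(N_1)=r(N_2)$ your argument (invoke the lemma once in each ordering) is the paper's proof verbatim. The one point of divergence is the scope of the class: the paper reads ``with $r\geq 0$ reticulation vertices'' as a class with a \emph{fixed} number $r$ of reticulation vertices (this is explicit in the first line of its proof and in the abstract), so every pair in $\mathcal N$ satisfies $r(N_1)=r(N_2)$ and your extra case never arises.

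That extra case, as you argue it, contains a genuine error. Suppose, as you concede may happen, that $\V_{N_2}\subseteq\V_{N_1}$ and in fact $\mathcal M_{N_2}\subseteq\mathcal M_{N_1}$. Then $\psi_{N_2}^{-1}(\mathcal M_{N_1})=\theta_{N_2}$, which has full measure, and the observation that $\mathcal M_{N_2}$ lies on the topological boundary of $\mathcal M_{N_1}$ is irrelevant: a boundary point of $\mathcal M_{N_1}$ can perfectly well be a point of $\mathcal M_{N_1}$, and Definition~\ref{defn: genericallyidentifiable} asks only about membership, not about interiority. Nor do such containments arise only from degenerate limits $\delta_i\to 0,1$; for instance, distributions of a displayed tree can be realized on a $4$-cycle network with $\delta$ strictly inside $(0,1)$ by choosing parameters so that the two displayed-tree distributions coincide. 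This is precisely why Table~\ref{tab:InclusionOverview} records only the one-sided $\V_{N_1}\not\supseteq\V_{N_2}$ for, e.g., a quartet tree against a $4$-cycle network, and why the theorem fixes $r$: nothing in the paper supports identifiability for a class mixing different reticulation numbers. Deleting your unequal-$r$ case leaves exactly the paper's proof.
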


\begin{proof}
Let $\{\mathcal M_N\}_{N\in \mathcal N}$ be a class of triangle-free, level-1 network models with a fixed number of reticulation vertices. 
Let~$N_1,N_2\in \mathcal N$ be distinct~$n$-leaf triangle-free level-$1$ semi-directed
networks on~$\leafSet$ with~$r(N_1) = r(N_2)$.
By invoking Lemma~\ref{lem:inductiveProof} twice, we have~$\V_{\N_1} \not\subseteq \V_{\N_2}$
and~$\V_{\N_2} \not\subseteq \V_{\N_1}$ under the JC, K2P, and K3P constraints.
By definition,~$N_1$ and~$N_2$ are distinguishable; as~$N_1$ and~$N_2$ were chosen arbitrarily from~$\mathcal N$, it follows that the semi-directed network parameter of~$\{\mathcal M_N\}_{N\in \mathcal N}$ is generically identifiable under the JC, K2P, and K3P constraints.
\qed
\end{proof}

\section{Discussion}

We have shown that triangle-free level-1 semi-directed networks are generically identifiable under the Jukes--Cantor, Kimura 2-parameter, and Kimura 3-parameter constraints. This means that, given a long enough multiple sequence alignment that evolved on a network of this class under one of these models, this network is, with high probability, the only network from the class that coincides with the given data. Roughly speaking, this means that the data provide sufficient information to reconstruct the network. To prove this result, we employed a blend of algebraic and combinatorial methods to show
that any pair of networks are geometrically distinguishable.

Previously, it had been shown that networks cannot be identified from certain substructures. 
For example, networks cannot be inferred from their displayed trees
since more than one network can display the same set of trees \citep{Gambette2012,Pardi2015}. 
Similarly, a network cannot in general be reconstructed from its collection of proper subnetworks, since two distinct networks can have exactly the same set of proper subnetworks~\citep{Huber2015}.
Nevertheless, for certain restricted network classes it has been shown that those networks can be uniquely reconstructed from their subnetworks~\citep{huber2013encoding,quarnets,van2014trinets,leonie}. These proofs are related to our combinatorial results, in that our proof strategy for showing network distinguishability involved careful examination of induced 4-leaf subnetworks.
However, there are some fundamental differences that prevent directly using known results on building networks from subnetworks. Firstly, the existing results focus either on directed (e.g.~\cite{van2014trinets}) or on undirected (e.g.~\cite{van2018leaf}) networks. Our results, as well as the ones in~\cite{allman2019,banos2019,quarnets},
provide the first combinatorial results on semi-directed networks.
The main obstacle, however, was that not all 4-leaf level-1 semi-directed networks are distinguishable under the considered models. Hence, two networks can be indistinguishable even if the sets of induced subnetworks are distinct. Consequently, we had a severely restricted set of building blocks available, requiring a combination of combinatorial and algebraic techniques.

On the algebraic front, the computations reveal differences between the relationships between the network ideals under the JC constraints and the relationships between the ideals under the K2P and K3P constraints that would be interesting directions for further exploration.   In \cite{hollering2020Identifiability}, the authors remark that the phenomenon observed in \cite{Gross2017DistinguishingPN} under the JC constraints, 
where each triangle network variety is contained within several of the 4-cycle network varieties, does not occur under the K2P and K3P constraints.
In other words, under the K2P and K3P constraints, 4-cycle networks and triangle networks are distinguishable.  In our computations for this paper, we noticed another phenomenon that seems to only hold for JC constraints.
In particular, under the
JC constraints, the ideals for the double-triangle
networks and the 4-cycle networks are of the same dimension and are all distinct. This is somewhat surprising as one
might expect the additional reticulation vertex and associated reticulation parameters 
of the double-triangle network to increase the dimension
of the model. Our numerical computations suggest
that this is another unique feature
of the JC constraints. However, establishing this result
rigorously may require other methods, since we were unable to compute full generating sets for the vanishing ideals of the networks under the K2P and K3P constraints. 
Additionally, from an algebraic perspective, we note that adapting the random search strategy described in \cite{hollering2020Identifiability} is what allowed us to find candidate subsets of variables 
for locating the necessary invariants to establish our main result. Something similar will likely need to be employed if these results are to be extended to other families of
networks. It would be interesting to understand the relative
computational costs once a candidate subset of variables is found, of either computing invariants in a subring of the original variables as we did, or of computing 
the linear matroid of the Jacobian with symbolic parameters as was done in \cite{hollering2020Identifiability}.

Finally, a major open problem, which is the larger setting for this paper, is to determine whether  generic identifiability results such as these can be extended to higher level networks.  
We expect finding the necessary invariants for the  increased number of non-unique induced 4-leaf subnetworks will be challenging.
Furthermore, the complexity of the combinatorial part of the proof will explode for higher levels. This question is open not only for the group-based models studied in this paper, but also for the general Markov model, which has just started to be studied in the context of networks \cite{casanellas2020rank}.


\begin{thebibliography}{46}
	\providecommand{\natexlab}[1]{#1}
	\providecommand{\url}[1]{{#1}}
	\providecommand{\urlprefix}{URL }
	\expandafter\ifx\csname urlstyle\endcsname\relax
	\providecommand{\doi}[1]{DOI~\discretionary{}{}{}#1}\else
	\providecommand{\doi}{DOI~\discretionary{}{}{}\begingroup
		\urlstyle{rm}\Url}\fi
	\providecommand{\eprint}[2][]{\url{#2}}
	
	\bibitem[{Allman and Rhodes(2006)}]{Allman2006}
	Allman ES, Rhodes JA (2006) The identifiability of tree topology for
	phylogenetic models, including covarion and mixture models. J Comp Biol
	13(5):1101--1113
	
	\bibitem[{Allman et~al.(2011)Allman, Petrovi\'c, Rhodes, and
		Sullivant}]{Allman2011Identifiability}
	Allman ES, Petrovi\'c S, Rhodes JA, Sullivant S (2011) Identifiability of
	2-tree mixtures for group-based models. IEEE/ACM Trans Comp Biol
	Bioinformatics 8(3):710--722
	
	\bibitem[{Allman et~al.(2019)Allman, Ba{\~n}os, and Rhodes}]{allman2019}
	Allman ES, Ba{\~n}os H, Rhodes JA (2019) Nanuq: a method for inferring 
	species
	networks from gene trees under the coalescent model. Algorithms for 
	Molecular
	Biology 14(1):24
	
	\bibitem[{Ba{\~n}os(2019)}]{banos2019}
	Ba{\~n}os H (2019) Identifying species network features from gene tree 
	quartets
	under the coalescent model. Bulletin of mathematical biology 81(2):494--534
	
	\bibitem[{Bapteste et~al.(2013)Bapteste, van Iersel, Janke, Kelchner, Kelk,
		McInerney, Morrison, Nakhleh, Steel, Stougie et~al.}]{bapteste}
	Bapteste E, van Iersel L, Janke A, Kelchner S, Kelk S, McInerney JO, 
	Morrison
	DA, Nakhleh L, Steel M, Stougie L, et~al. (2013) Networks: expanding
	evolutionary thinking. Trends in Genetics 29(8):439--441
	
	\bibitem[{Baroni et~al.(2005)Baroni, Semple, and Steel}]{Baroni2005}
	Baroni M, Semple C, Steel M (2005) A framework for representing reticulate
	evolution. Annals of Combinatorics 8(4):391--408
	
	\bibitem[{Bryant and Moulton(2004)}]{NeighborNet2004}
	Bryant D, Moulton V (2004) Neighbor-net: an agglomerative method for the
	construction of phylogenetic networks. Molecular biology and evolution
	21(2):255--265
	
	\bibitem[{Cardona et~al.(2007)Cardona, Rossel\'o, and 
	Valiente}]{Cardona2007}
	Cardona G, Rossel\'o F, Valiente G (2007) Comparison of tree-child 
	phylogenetic
	networks. IEEE/ACM Trans Comp Biol Bioinformatics 6:552--569
	
	\bibitem[{Casanellas and 
	Fern{\'a}ndez-S{\'a}nchez(2020)}]{casanellas2020rank}
	Casanellas M, Fern{\'a}ndez-S{\'a}nchez J (2020) Rank conditions on
	phylogenetic networks. arXiv preprint arXiv:200412988
	
	\bibitem[{Casanellas et~al.(2005)Casanellas, Garcia, and
		Sullivant}]{smalltreescatalog}
	Casanellas M, Garcia LD, Sullivant S (2005) Catalog of Small Trees, 
	Cambridge
	University Press, p 291–304. \doi{10.1017/CBO9780511610684.019}
	
	\bibitem[{Chang(1996)}]{Chang1996}
	Chang J (1996) Full reconstruction of {M}arkov models on evolutionary trees:
	identifiability and consistency. Math Biosci 137(1):51--73
	
	\bibitem[{Evans and Speed(1993)}]{Evans1993}
	Evans S, Speed T (1993) Invariants of some probability models used in
	phylogenetic inference. Ann Statist 21(1):355--377
	
	\bibitem[{Francis and Moulton(2018)}]{francis2018identifiability}
	Francis A, Moulton V (2018) Identifiability of tree-child phylogenetic 
	networks
	under a probabilistic recombination-mutation model of evolution. Journal of
	theoretical biology 446:160--167
	
	\bibitem[{Gambette and Huber(2012)}]{Gambette2012}
	Gambette P, Huber KT (2012) On encodings of phylogenetic networks of bounded
	level. Journal of Mathematical Biology 65(1):157--180
	
	\bibitem[{Grayson and Stillman(2002)}]{M2}
	Grayson D, Stillman M (2002) Macaulay2, a software system for research in
	algebraic geoemetry. Available at http://www.math.uiuc.edu/Macaulay2/
	
	\bibitem[{Gross and Long(2017)}]{Gross2017DistinguishingPN}
	Gross EK, Long C (2017) Distinguishing phylogenetic networks. SIAM J Appl
	Algebra Geom 2:72--93
	
	\bibitem[{Gusfield(2014)}]{gusfield2014recombinatorics}
	Gusfield D (2014) ReCombinatorics: the algorithmics of ancestral 
	recombination
	graphs and explicit phylogenetic networks. MIT press
	
	\bibitem[{Hendy and Penny(1996)}]{hendy1996Complete}
	Hendy MD, Penny D (1996) Complete families of linear invariants for some
	stochastic models of sequence evolution, with and without molecular clock
	assumptions. J Comp Biol 3:19--32
	
	\bibitem[{Hollering and Sullivant(2020)}]{hollering2020Identifiability}
	Hollering B, Sullivant S (2020) Identifiability in phylogenetics using
	algebraic matroids. J Symb Comput https://doi.org/10.1016/j.jsc.2020.04.012
	
	\bibitem[{Huber and Moulton(2013)}]{huber2013encoding}
	Huber KT, Moulton V (2013) Encoding and constructing 1-nested phylogenetic
	networks with trinets. Algorithmica 66(3):714--738
	
	\bibitem[{Huber et~al.(2011)Huber, van Iersel, Kelk, and 
	Suchecki}]{Huber2011}
	Huber KT, van Iersel L, Kelk S, Suchecki R (2011) A practical algorithm for
	reconstructing level-1 phylogenetic networks. IEEE/ACM Transactions on
	Computational Biology and Bioinformatics (TCBB) 8(3):635--649
	
	\bibitem[{Huber et~al.(2015)Huber, Van~Iersel, Moulton, and Wu}]{Huber2015}
	Huber KT, Van~Iersel L, Moulton V, Wu T (2015) {How Much Information is 
	Needed
		to Infer Reticulate Evolutionary Histories?} Systematic Biology
	64(1):102--111, \doi{10.1093/sysbio/syu076}
	
	\bibitem[{Huber et~al.(2019)Huber, van Iersel, Janssen, Jones, Moulton,
		Murakami, and Semple}]{huber2019rooting}
	Huber KT, van Iersel L, Janssen R, Jones M, Moulton V, Murakami Y, Semple C
	(2019) Rooting for phylogenetic networks. arXiv preprint arXiv:190607430
	
	\bibitem[{Huebler et~al.(2019)Huebler, Morris, Rusinko, and Tao}]{quarnets}
	Huebler S, Morris R, Rusinko J, Tao Y (2019) Constructing semi-directed 
	level-1
	phylogenetic networks from quarnets. arXiv preprint arXiv:191000048
	
	\bibitem[{Huson et~al.(2010)Huson, Rupp, and
		Scornavacca}]{huson2010phylogenetic}
	Huson DH, Rupp R, Scornavacca C (2010) Phylogenetic networks: concepts,
	algorithms and applications. Cambridge University Press
	
	\bibitem[{van Iersel and Moulton(2014)}]{van2014trinets}
	van Iersel L, Moulton V (2014) Trinets encode tree-child and level-2
	phylogenetic networks. Journal of mathematical biology 68(7):1707--1729
	
	\bibitem[{van Iersel and Moulton(2018)}]{van2018leaf}
	van Iersel L, Moulton V (2018) Leaf-reconstructibility of phylogenetic
	networks. SIAM Journal on Discrete Mathematics 32(3):2047--2066
	
	\bibitem[{Jin et~al.(2007)Jin, Nakhleh, Snir, and Tuller}]{Jin2007}
	Jin G, Nakhleh L, Snir S, Tuller T (2007) Efficient parsimony-based methods 
	for
	phylogenetic network reconstruction. Bioinformatics 23(2):e123--e128
	
	\bibitem[{Long and Kubatko(2018)}]{Long2017Identifiability}
	Long C, Kubatko L (2018) Identifiability and reconstructibility of a 
	modified
	coalescent. Bull Math Biol (to appear)
	
	\bibitem[{Nakhleh(2011)}]{Nakhleh2011}
	Nakhleh L (2011) Problem Solving Handbook in Computational Biology and
	Bioinformatics, Springer Science+Business Media, LLC, chap Evolutionary
	Phylogenetic Networks: Models and Issues, pp 125--158
	
	\bibitem[{Nakhleh et~al.(2005)Nakhleh, Ruths, and Wang}]{RIATA2005}
	Nakhleh L, Ruths D, Wang LS (2005) Riata-hgt: a fast and accurate heuristic 
	for
	reconstructing horizontal gene transfer. In: International Computing and
	Combinatorics Conference, Springer, pp 84--93
	
	\bibitem[{Nipius(2020)}]{leonie}
	Nipius L (2020) Rooted binary level-3 phylogenetic networks are encoded by
	quarnets. Bachelor's thesis, Delft University of Technology,
	http://resolver.tudelft.nl/uuid:a9c5a8d4-bc8b-4d15-bdbb-3ed35a9fb75d
	
	\bibitem[{Pardi and Scornavacca(2015)}]{Pardi2015}
	Pardi F, Scornavacca C (2015) Reconstructible phylogenetic networks: Do not
	distinguish the indistinguishable. PLos Comput Biol 11(4)
	
	\bibitem[{Rhodes and Sullivant(2012)}]{Rhodes2012}
	Rhodes JA, Sullivant S (2012) Identifiability of large phylogenetic 
	mixtures.
	Bull Math Biol 74(1):212--231
	
	\bibitem[{Rossello and Valiente(2009)}]{RosselloValienteGalled}
	Rossello F, Valiente G (2009) All that glisters is not galled. Mathematical
	Biosciences 221:54--59, \doi{10.1016/j.mbs.2009.06.007}
	
	\bibitem[{Semple and Steel(2003)}]{SempleSteel}
	Semple C, Steel M (2003) Phylogenetics, vol~24. Oxford University Press on
	Demand
	
	\bibitem[{Sol{\'\i}s-Lemus and An{\'e}(2016)}]{Solis2016}
	Sol{\'\i}s-Lemus C, An{\'e} C (2016) Inferring phylogenetic networks with
	maximum pseudolikelihood under incomplete lineage sorting. PLoS Genetics
	12(3):e1005896
	
	\bibitem[{Sol{\'\i}s-Lemus et~al.(2017)Sol{\'\i}s-Lemus, Bastide, and
		An{\'e}}]{solis2017phylonetworks}
	Sol{\'\i}s-Lemus C, Bastide P, An{\'e} C (2017) Phylonetworks: a package for
	phylogenetic networks. Molecular biology and evolution 34(12):3292--3298
	
	\bibitem[{Solis-Lemus et~al.(2020)Solis-Lemus, Coen, and
		Ane}]{solis2020identifiability}
	Solis-Lemus C, Coen A, Ane C (2020) On the identifiability of phylogenetic
	networks under a pseudolikelihood model. arXiv preprint arXiv:201001758
	
	\bibitem[{Steel(2016)}]{steel2016phylogeny}
	Steel M (2016) Phylogeny: discrete and random processes in evolution. SIAM
	
	\bibitem[{Sturmfels and Sullivant(2005)}]{Toric2005}
	Sturmfels B, Sullivant S (2005) Toric ideals of phylogenetic invariants.
	Journal of Computational Biology 12(2):204--228
	
	\bibitem[{Sullivant(2018)}]{sullivant2018algebraic}
	Sullivant S (2018) Algebraic statistics, vol 194. American Mathematical Soc.
	
	\bibitem[{Than et~al.(2008)Than, Ruths, and Nakhleh}]{than2008phylonet}
	Than C, Ruths D, Nakhleh L (2008) Phylonet: A software package for analyzing
	and reconstructing reticulate evolutionary histories. BMC Bioinformatics
	9:322
	
	\bibitem[{Thatte(2013)}]{thatte2013reconstructing}
	Thatte BD (2013) Reconstructing pedigrees: some identifiability questions 
	for a
	recombination-mutation model. Journal of mathematical biology 66(1):37--74
	
	\bibitem[{Wen et~al.(2018)Wen, Yu, Zhu, and Nakhleh}]{wen2018inferring}
	Wen D, Yu Y, Zhu J, Nakhleh L (2018) Inferring phylogenetic networks using
	phylonet. Systematic biology 67(4):735--740
	
	\bibitem[{Yang et~al.(2014)Yang, Gr{\"u}newald, Xu, and 
	Wan}]{QuartetNet2014}
	Yang J, Gr{\"u}newald S, Xu Y, Wan XF (2014) Quartet-based methods to
	reconstruct phylogenetic networks. BMC systems biology 8(1):21
	
\end{thebibliography}
\end{document}